\newtheorem{definition}{Definition}
\newtheorem{proposition}[definition]{Proposition}
\newtheorem{lemma}[definition]{Lemma}
\newtheorem{theorem}[definition]{Theorem}
\newtheorem{corollary}[definition]{Corollary}
\newtheorem{conjecture}[definition]{Conjecture}
\newtheorem{remark}[definition]{Remark}
\newtheorem{example}[definition]{Example}
\def\squareforqed{\hbox{\rlap{$\sqcap$}$\sqcup$}}
\def\qed{\ifmmode\squareforqed\else{\unskip\nobreak\hfil
\penalty50\hskip1em\null\nobreak\hfil\squareforqed
\parfillskip=0pt\finalhyphendemerits=0\endgraf}\fi}
\def\endenv{\ifmmode\;\else{\unskip\nobreak\hfil
\penalty50\hskip1em\null\nobreak\hfil\;
\parfillskip=0pt\finalhyphendemerits=0\endgraf}\fi}
\newenvironment{proof}{\noindent \textbf{{Proof.~} }}{\qed}
\def\bcj{\begin{conjecture}}
\def\ecj{\end{conjecture}}
\def\bcr{\begin{corollary}}
\def\ecr{\end{corollary}}
\def\bd{\begin{definition}}
\def\ed{\end{definition}}
\def\bea{\begin{eqnarray}}
\def\eea{\end{eqnarray}}
\def\bem{\begin{enumerate}}
\def\eem{\end{enumerate}}
\def\bex{\begin{example}}
\def\eex{\end{example}}
\def\bim{\begin{itemize}}
\def\eim{\end{itemize}}
\def\bl{\begin{lemma}}
\def\el{\end{lemma}}
\def\bpf{\begin{proof}}
\def\epf{\end{proof}}
\def\bpp{\begin{proposition}}
\def\epp{\end{proposition}}
\def\bqu{\begin{question}}
\def\equ{\end{question}}
\def\br{\begin{remark}}
\def\er{\end{remark}}
\def\bt{\begin{theorem}}
\def\et{\end{theorem}}
\def\bma{\begin{bmatrix}}
\def\ema{\end{bmatrix}}
\def\btb{\begin{tabular}}
\def\etb{\end{tabular}}
\newcommand{\nc}{\newcommand}
\def\a{\alpha}
\def\b{\beta}
\def\g{\gamma}
\def\ve{\varepsilon}
\def\r{\rho}
\def\G{\Gamma}
\def\L{\Lambda}
\def\Ps{\Psi}
 \nc{\bA}{{\bf A}} \nc{\bB}{{\bf B}} \nc{\bC}{{\bf C}}
 \nc{\bD}{{\bf D}} \nc{\bE}{{\bf E}} \nc{\bF}{{\bf F}}
 \nc{\bG}{{\bf G}} \nc{\bH}{{\bf H}} \nc{\bI}{{\bf I}}
 \nc{\bJ}{{\bf J}} \nc{\bK}{{\bf K}} \nc{\bL}{{\bf L}}
 \nc{\bM}{{\bf M}} \nc{\bN}{{\bf N}} \nc{\bO}{{\bf O}}
 \nc{\bP}{{\bf P}} \nc{\bQ}{{\bf Q}} \nc{\bR}{{\bf R}}
 \nc{\bS}{{\bf S}} \nc{\bT}{{\bf T}} \nc{\bU}{{\bf U}}
 \nc{\bV}{{\bf V}} \nc{\bW}{{\bf W}} \nc{\bX}{{\bf X}}
 \nc{\bZ}{{\bf Z}}
\nc{\cA}{{\cal A}} \nc{\cB}{{\cal B}} \nc{\cC}{{\cal C}}
\nc{\cD}{{\cal D}} \nc{\cE}{{\cal E}} \nc{\cF}{{\cal F}}
\nc{\cG}{{\cal G}} \nc{\cH}{{\cal H}} \nc{\cI}{{\cal I}}
\nc{\cJ}{{\cal J}} \nc{\cK}{{\cal K}} \nc{\cL}{{\cal L}}
\nc{\cM}{{\cal M}} \nc{\cN}{{\cal N}} \nc{\cO}{{\cal O}}
\nc{\cP}{{\cal P}} \nc{\cQ}{{\cal Q}} \nc{\cR}{{\cal R}}
\nc{\cS}{{\cal S}} \nc{\cT}{{\cal T}} \nc{\cU}{{\cal U}}
\nc{\cV}{{\cal V}} \nc{\cW}{{\cal W}} \nc{\cX}{{\cal X}}
\nc{\cZ}{{\cal Z}}
\nc{\hA}{{\hat{A}}} \nc{\hB}{{\hat{B}}} \nc{\hC}{{\hat{C}}}
\nc{\hD}{{\hat{D}}} \nc{\hE}{{\hat{E}}} \nc{\hF}{{\hat{F}}}
\nc{\hG}{{\hat{G}}} \nc{\hH}{{\hat{H}}} \nc{\hI}{{\hat{I}}}
\nc{\hJ}{{\hat{J}}} \nc{\hK}{{\hat{K}}} \nc{\hL}{{\hat{L}}}
\nc{\hM}{{\hat{M}}} \nc{\hN}{{\hat{N}}} \nc{\hO}{{\hat{O}}}
\nc{\hP}{{\hat{P}}} \nc{\hR}{{\hat{R}}} \nc{\hS}{{\hat{S}}}
\nc{\hT}{{\hat{T}}} \nc{\hU}{{\hat{U}}} \nc{\hV}{{\hat{V}}}
\nc{\hW}{{\hat{W}}} \nc{\hX}{{\hat{X}}} \nc{\hZ}{{\hat{Z}}}
 \nc{\bbA}{\mathbb{A}} \nc{\bbB}{\mathbb{B}} \nc{\bbC}{\mathbb{C}} \nc{\bbR}{\mathbb{R}} \nc{\bbF}{\mathbb{F}}
\nc{\gu}{{\frak u}}
\def\dim{\mathop{\rm Dim}}
\def\tr{\mathop{\rm Tr}}
\def\SU{{\mbox{\rm SU}}}
\def\SO{{\mbox{\rm SO}}}
\def\Ort{{\mbox{\rm O}}}
\def\Un{{\mbox{\rm U}}}
\def\dg{\dagger}
\def\ox{\otimes}
\def\ra{\rightarrow}
\newcommand{\bra}[1]{\langle#1|}
\newcommand{\ket}[1]{|#1\rangle}
\newcommand{\proj}[1]{| #1\rangle\!\langle #1 |}
\newcommand{\ketbra}[2]{|#1\rangle\!\langle#2|}
\begin{document}
\title{The existence of distinguishable bases in three-dimensional subspaces of qutrit-qudit systems under one-way local operations and classical communication}

\author{Zhiwei Song}\email[]{zhiweisong@buaa.edu.cn }
\affiliation{School of Mathematics and Systems Science, Beihang University, Beijing 100191, China}

\author{Lin Chen}\email[]{linchen@buaa.edu.cn (corresponding  author)}
\affiliation{School of Mathematics and Systems Science, Beihang University, Beijing 100191, China}
\affiliation{International Research Institute for Multidisciplinary Science, Beihang University, Beijing 100191, China}

\def\Dbar{\leavevmode\lower.6ex\hbox to 0pt
{\hskip-.23ex\accent"16\hss}D}
\author {{ Dragomir {\v{Z} \Dbar}okovi{\'c}}}\email[]{dragomir@rogers.com}
\affiliation{Department of Pure Mathematics and Institute for
Quantum Computing, University of Waterloo, Waterloo, Ontario, N2L
3G1, Canada} 


\begin{abstract}
We show that every three-dimensional subspace of qutrit-qudit complex or real systems has a distinguishable basis under one-way local operations and classical communication (LOCC). In 
particular this solves an open problem proposed in [J. Phys. A, 40, 7937, 2007]. We construct a three-dimensional space whose locally distinguishable basis is unique and apply the uniqueness property to the task of state transformation. We also construct a three-dimensional locally distinguishable multipartite space assisted with entanglement. On the other hand, we show that four-dimensional indistinguishable bipartite subspaces under one-way LOCC exist. Further, we show that the environment-assisted classical capacity of every channel with a three-dimensional environment is at least $\log_2 3$, and the environment-assisting classical capacity of any 
qutrit channel is $\log_2 3$. We also show that every two-qutrit state can be converted into a generalized classical state near the quantum-classical  boundary by an entanglement-breaking channel.  
\end{abstract}

\date{ \today }

\pacs{03.65.Ud, 03.67.Mn, 03.67.-a}



\maketitle 

\textit{Introduction.-} Quantum nonlocality has been widely regarded as one of the fundamental properties of quantum mechanics. Nonlocality can be manifested by Bell inequalities \cite{1964Bell}, quantum entanglement \cite{PhysRev47777}, the indistinguishability of multipartite states under local operations and classical communications (LOCC) \cite{PhysRevA59,PhysRevLett96,PhysRevLett109,PhysRevLett100}, the construction of uniform states in heterogeneous systems \cite{shi2022k} and strongly nonlocal unextendible product bases \cite{shi2022strongly}. These physical phenomena and resources have been the key ingredients in applications such as quantum computing, data hiding and secret sharing. 
Further, the 
asymptotic LOCC
discrimination of 
bipartite states is 
related to the 
Chernoff distance \cite{cmp2009}.

It is known that every two-dimensional subspace of arbitrary multipartite system has a locally distinguishable orthonormal basis \cite{2001Local}. The same conclusion holds for every qubit-qudit subspace \cite{2011Any}. 
In contrast, finding such a basis in higher-dimensional spaces turns out to be a hard problem. Watrous proved the existence of bipartite subspaces having no distinguishable basis under LOCC \cite{watrous2005bipartite}. The currently known minimal dimension of a bipartite indistinguishable subspace under LOCC is seven \cite{duan2009distinguishability}.
The main problem so far is 
whether every three-dimensional subspace of 
qutrit-qudit systems has a basis which is distinguishable under one-way LOCC \cite{2007On}. 
Note that one-way LOCC requires a fixed ordering
of the actions by the parties. More specifically, suppose Alice and Bob share a combined quantum system, we say that an orthonormal basis is  distinguishable under one-way LOCC if measurements are made first by Alice and the results are sent to Bob through classical communications, who can select a measurement to distinguish the basis.  A subspace is called {\em indistinguishable under one-way LOCC} if such a basis does not exist.
The problem mentioned above has attracted attention in the past years \cite{chen2018locally,kribs2019quantum}.

In this paper, by using some well-known facts from differential topology, we give a positive answer to the problem stated above, but a negative answer if the dimension of the subspace becomes four. We show that the locally distinguishability of bipartite system can be extended to multipartite system assisted with entanglement. We also investigate the uniqueness property of the locally distinguishable basis and apply the property to the task of state transformation.

Locally distinguishable and indistinguishable subspaces play an 
important role in the study of classical corrected capacity of quantum channels, which is defined as the best classical capacity one can achieve when the receiver
of the noisy channel can be assisted with a friendly environment
through LOCC \cite{2005Correcting}. In a short word, measurements on the environment and system can help to recover the input information from the channel. The classical corrected capacity of any quantum channel is at least one bit of information according to the result of \cite{2001Local}. It was later shown that the classical corrected capacity of any rank-two quantum channel is $\log_2 d$, where $d$ is the dimension of the input space \cite{2011Any}. On the other hand, there exist quantum channels with classical corrected capacity less than $\log_2 d$ \cite{watrous2005bipartite}.
Our results imply that by first measuring on the environment, the environment-assisted classical capacity of every channel with a three-dimensional environment is at least $\log_2 3$. In addition, by first measuring on the system, the environment-assisting classical capacity of every qutrit channel is $\log_2 3$. Next, we apply our result to investigate the nonlocality without entanglement in terms of the so-called generalized classical state in the quantum-classical boundary \cite{chen2011detecting}. We show that every two-qutrit state can be converted into a generalized classical state near the quantum-classical  boundary by a local entanglement-breaking channel from the side of system Alice (or Bob). Further, if the other side, namely the system Bob (or Alice) is in 
the maximally mixed state then the generalized classical state becomes a classical state. So the quantumness of both systems can be removed by a local operation by Alice (or Bob) only. The full classicality also implies the tasks of deterministic local broadcasting \cite{chen2011detecting}. 

\textit{Description and proof of the K-M conjecture.-} Let $\cH=\cH_A\ox\cH_B$ be the bipartite Hilbert space with 
$\dim \cH_A=m,\dim \cH_B=n$.
Let $\cM$ denote the space of Hermitian operators on 
$\cH$. We partition each $M\in\cM$ into $m^2$ square blocks $M_{ij}$ of order $n$.
For each $M\in \cM$, we set $M_A:=\sum_{j=1}^n \bra{j}_B M \ket{j}_B$ and 
$M_B:=\sum_{j=1}^m \bra{j}_A M \ket{j}_A$ where $\{\ket{j}_A\}$ 
and $\{\ket{j}_B\}$ are the computational bases of 
$\cH_A$ and $\cH_B$ respectively.

We introduce two vector subspaces of $\cM$, namely 
$\cM_0:=\{M\in\cM:M_B=0\}$ and $\cM_{00}=\{M\in\cM_0: M_A=0\}$.
Note that
$M_B=\sum_{i=1}^m M_{ii}.$ We refer to the $M_{ii}$ as the {\em diagonal blocks} of $M$. We say that $M\in\cM$ is a {\em dd-matrix} if all diagonal blocks $M_{ii}$ are diagonal matrices. 
Denote by $\cD_{00}$ the subspace of $\cM_{00}$ consisting of all dd-matrices.
We refer to a quantum state as a {\em dd-state} if the density matrix of the state is a dd-matrix. 
The inequality $M\ge 0$ means that $M$ is positive semidefinite. 

By $\Un(n)$ we denote the group of unitary matrices of order $n$, 
and we refer to the direct product $\Un(m)\times\Un(n)$ as the 
{\em local unitary group}. The matrices in $\Un(n)$ with 
determinant 1 form the special unitary group $\SU(n)$. The subgroup 
$\cM \SU(n)$ consists of all monomial matrices in $\SU(n)$. (A matrix is monomial 
if each row and each column has exactly one nonzero entry 
and all nonzero entries have modulus 1.)
The action of $\Un(m) \times \Un(n)$ on $\cM$ is defined by 
\begin{eqnarray}
(U,V)\cdot M := (U\ox V)M(U\ox V)^\dag.
\end{eqnarray}
Since the center of $\Un(m) \times \Un(n)$ acts trivially on $\cM$, the orbits of $\Un(m) \times \Un(n)$ in $\cM$ are the same as those of $\SU(m)\times\SU(n)$, and we refer to them as 
$LU-orbits$. Two matrices $M_1,M_2\in\cM$ are 
{\em LU-equivalent} if they belong to the same LU-orbit, i.e., 
$(U,V)\cdot M_1=M_2$ for some $(U,V)\in\SU(m)\times\SU(n)$.

Assume now that $m=n=3$. Thus for $M\in\cM$ we have 
$M=[M_{ij}]$, $i,j=1,2,3$. Let $\cM_1:=\{M\in\cM:M_B=I_3\}$. 
The following conjecture was proposed by King and Matysiak \cite{2007On}. Let us state their mathematical formulation of the conjecture to which we shall refer as the {\em K-M conjecture}.

\textbf{{\em K-M conjecture:}} If $M\in\cM_1$ and $M\ge0$, then $M$ is LU-equivalent to a dd-matrix.

Now we present the main result of this paper. The detailed proof will be given in Appendices A and B.
\begin{theorem}
	\label{3dim}
	The K-M conjecture is true. Equivalently, any three-dimensional subspace of $\bbC^3\otimes \bbC^n$ has an orthonormal basis which is distinguishable under one-way LOCC.
\end{theorem}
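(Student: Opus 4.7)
The plan is to rephrase the K-M conjecture as finding a zero of a smooth map on the local unitary group and then produce such a zero by a mod-$2$ degree (equivalently, intersection-number) argument from differential topology. Given $M\in\cM_1$ with $M\ge 0$, for each $(U,V)\in\SU(3)\times\SU(3)$ let $F_M(U,V)$ record the off-diagonal entries of the three diagonal $3\times 3$ blocks of $(U,V)\cdot M$. Since the LU action preserves $M_B=I_3$, these three blocks always sum to $I_3$, so $F_M$ takes values in a fixed $(3\cdot 6-6)=12$-dimensional real vector space $W$. The vanishing $F_M(U,V)=0$ is exactly the condition that $(U,V)\cdot M$ be a dd-matrix, so the K-M conjecture is equivalent to $F_M^{-1}(0)\ne\emptyset$.

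Dimension and symmetry then set up the degree-theoretic framework. The source has $\dim(\SU(3)\times\SU(3))=16$, so the generic fiber $F_M^{-1}(0)$, if nonempty, has expected dimension $16-12=4$. Moreover $F_M$ is equivariant under the left action of the $4$-dimensional maximal torus $T\subset\SU(3)\times\SU(3)$: if $(D_1,D_2)\in T$ is diagonal, then $(D_1U,D_2V)\cdot M = (D_1\ox D_2)\,((U,V)\cdot M)\,(D_1\ox D_2)^\dag$, whose $(i,i)$-block is $D_2((U,V)\cdot M)_{ii}D_2^\dag$, so the off-diagonal entry $((\,\cdot\,)_{ii})_{jk}$ is merely rotated by the phase $(D_2)_{jj}\overline{(D_2)_{kk}}$ and $0\in W$ is fixed. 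Hence $F_M^{-1}(0)$ is $T$-invariant, and after passing to the induced map $\bar F_M$ on the appropriate quotient a mod-$2$ Brouwer degree $\deg_2\bar F_M$ at $0$ is well defined once transversality is arranged.

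The heart of the proof is a homotopy computation inside the convex set $\cM_1^+:=\{M\in\cM_1:M\ge 0\}$. Connect the given $M$ to a conveniently chosen reference state $M_0\in\cM_1^+$ by the straight-line path $M_t=(1-t)M_0+tM$, which lies in $\cM_1^+$ by convexity. For $M_0$ one takes a dd-matrix whose three diagonal blocks have pairwise distinct spectra and whose off-diagonal blocks are generic; by uniqueness of the simultaneous diagonalization of a generic commuting triple, $F_{M_0}^{-1}(0)$ consists of finitely many $T$-orbits indexed (up to symmetry) by the monomial subgroup $\cM\SU(3)\times\cM\SU(3)$, whose cardinality mod $T$ is odd. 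Hence $\deg_2\bar F_{M_0}=1$, homotopy invariance along $\{M_t\}$ gives $\deg_2\bar F_M=1$, and in particular $F_M^{-1}(0)\ne\emptyset$.

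The main obstacle, and where the real work lies, is verifying the hypotheses that make homotopy invariance of the mod-$2$ degree apply: (i) that $0\in W$ is a regular value of $\bar F_{M_t}$ for almost every $t$, which one arranges by a Sard-type generic perturbation of $M$ inside $\cM_1^+$; and (ii) that no family of dd-orbits degenerates or escapes as $t$ varies, which is where $M_t\ge 0$ is essential and for which compactness of $\SU(3)\times\SU(3)$ gives a head start. A secondary, routine step is the reduction from the general $\bbC^3\ox\bbC^n$ form of Theorem~\ref{3dim} to the $m=n=3$ K-M case: starting from a three-dimensional subspace $S\subset\bbC^3\ox\bbC^n$ with orthonormal basis, one compresses Bob's Hilbert space to a suitable three-dimensional subspace adapted to the reduced density operator and applies an invertible local filter on Bob to normalize his marginal to $I_3$, an operation that preserves one-way LOCC distinguishability of orthonormal bases and so reduces the theorem to the K-M conjecture itself.
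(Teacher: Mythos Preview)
Your high-level strategy—show that the locus of local unitaries taking $M$ to a dd-matrix is nonempty by a mod-$2$ degree/cobordism argument on compact manifolds—is precisely the paper's idea. But your execution contains a fatal parity error. You quotient by the maximal torus $T\subset\SU(3)\times\SU(3)$ and assert that for a reference dd-matrix $M_0$ the solution set $F_{M_0}^{-1}(0)$ has ``cardinality mod $T$ that is odd.'' It does not: for a suitably generic $M_0$ the solutions are exactly the monomial pairs $\cM\SU(3)\times\cM\SU(3)$, and since $\cM\SU(3)/T_1\cong S_3$ one has $|F_{M_0}^{-1}(0)/T|=6\cdot6=36$, which is even. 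Your mod-$2$ count is therefore $0$ and proves nothing. A secondary problem is that $F_M$ is only $T$-\emph{equivariant} (the torus rotates the off-diagonal entries in $W$ by phases), not $T$-invariant, so there is no induced map $\bar F_M:T\backslash(\SU(3)\times\SU(3))\to W$; what you can legitimately do is count $T$-orbits in the zero set and argue by cobordism, but then the even count above still kills the argument.

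The paper repairs both issues simultaneously by recasting the problem as surjectivity of the map $f(U,V,H)=(U,V)\cdot H$ from $\SU(3)\times\SU(3)\times\cD_{00}$ onto $\cM_{00}$, projectivizing in $H$, and quotienting the source by the \emph{full} monomial group $\Gamma=\cM\SU(3)\times\cM\SU(3)$ rather than just $T$. Because the $\Gamma$-action on the triple $(U,V,H)$ is designed so that $f$ is $\Gamma$-\emph{invariant}, it descends to an honest smooth map $\phi^{\#}$ between compact $63$-manifolds, and the preimage of a well-chosen regular value is now a single $\Gamma$-orbit—count~$1$, odd—so surjectivity follows from standard mod-$2$ degree theory. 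The verification that this preimage is one $\Gamma$-orbit is the substantive lemma (Appendix~B of the paper): it is done by an explicit case analysis in the angle parametrization of $\SU(3)$, and your one-line appeal to ``uniqueness of simultaneous diagonalization of a generic commuting triple'' is not enough, since that argument constrains $V$ but not $U$.
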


For the equivalence of the mathematical formulation above and the original one in terms of local distinguishability see \cite{2007On}. Here we point out that any positive semidefinite matrix $M\in \cM_1$ 
can be associated with a three-dimensional subspace of $\bbC^3\otimes \bbC^n$ with a chosen orthonormal basis and a specified measurement basis of $\bbC^3$. Further, changing the orthonormal basis of the subspace induces a map $M\rightarrow (I_3, V)\cdot M$ for some $V\in \SU(3)$ and changing the measurement basis 
induces a map $M\rightarrow (U,I_3)\cdot M$ for some $U\in \SU(3)$.

The method used in the proof of Theorem \ref{3dim} can be also applied to the real case. Thus we obtain the following result, a detailed proof of which will be given in 
Appendix C (Theorem \ref{thm:real}).
\begin{theorem}
Any three-dimensional subspace of $\bbR^3\otimes \bbR^n$ has an orthonormal basis which is distinguishable under one-way LOCC.
\end{theorem}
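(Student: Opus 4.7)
The plan is to adapt the complex-case proof of Theorem~\ref{3dim} to the real setting, swapping Hermitian matrices and unitary groups for real symmetric matrices and orthogonal groups throughout. First, formulate the real analog of the K-M conjecture: every positive semidefinite real symmetric matrix $M$ on $\bbR^3\otimes\bbR^3$ with $M_B=I_3$ is $(\Ort(3)\times\Ort(3))$-equivalent to a real dd-matrix. The reduction from an arbitrary three-dimensional subspace $V\subset\bbR^3\otimes\bbR^n$ to this matrix statement parallels the complex case: an ON basis of $V$ together with a candidate measurement basis for Alice is encoded in a real matrix in the real analog of $\cM_1$, and the $(\Ort(3)\times\Ort(n))$-action implements the change of Alice's measurement basis and the change of ON basis of the subspace.

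Next, run the differential-topology argument of Appendices~A and~B in the real category. Let $\cM_1^\bbR$ denote the real symmetric matrices satisfying $M_B=I_3$ and $\cD_1^\bbR$ the real dd-matrices inside $\cM_1^\bbR$, and consider the orbit map $\Phi:(\SO(3)\times\SO(3))\times\cD_1^\bbR\to\cM_1^\bbR$ sending $((O_1,O_2),D)$ to $(O_1\otimes O_2)\,D\,(O_1\otimes O_2)^T$. A direct dimension count gives $\dim(\SO(3)\times\SO(3))+\dim\cD_1^\bbR=6+33=39=\dim\cM_1^\bbR$, so $\Phi$ is a proper polynomial map between manifolds of equal dimension and admits a well-defined mod-$2$ degree. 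Compute this degree at a convenient basepoint (say a generic diagonal $M_0$, where the preimages are enumerated by the finite subgroup of diagonal sign changes in $\SO(3)\times\SO(3)$); if it is odd, then the image of $\Phi$ covers all of $\cM_1^\bbR$, and in particular the PSD cone, proving the real K-M statement.

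The main obstacle is the \emph{exact} dimension balance: whereas the complex case enjoys a four-dimensional slack coming from the larger size of $\SU(3)\times\SU(3)$, in the real case the equality is sharp, so transversality of $\Phi$ at the chosen basepoint must be verified carefully before invoking the mod-$2$ degree. A secondary subtlety is the disconnectedness of $\Ort(3)$: one either works with $\Ort(3)\times\Ort(3)$ directly and uses mod-$2$ arithmetic, or keeps $\SO(3)\times\SO(3)$ and tracks the order-$2$ sign freedom explicitly. Once these technical points are handled, the argument concludes as in the complex case.
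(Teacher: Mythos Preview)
Your overall strategy—a degree-theoretic surjectivity argument for the orbit map between manifolds of equal dimension—coincides with the paper's, and your dimension count $6+33=39$ is correct. But the mod-$2$ degree computation cannot succeed as written. The dd-locus is invariant not merely under diagonal sign changes but under the full subgroup $\Gamma=S_4\times S_4\subset\SO(3)\times\SO(3)$ of \emph{all} monomial matrices (signed permutations with determinant~$1$), of order $24^2=576$. This $\Gamma$ acts freely on the domain $\SO(3)\times\SO(3)\times\cD_1^{\bbR}$ by $(\gamma_1,\gamma_2)\bullet(O_1,O_2,D)=(O_1\gamma_1^T,O_2\gamma_2^T,(\gamma_1,\gamma_2)\cdot D)$ and leaves every fibre of $\Phi$ invariant. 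Hence the cardinality of each regular fibre is a multiple of $576$, so $\deg_2\Phi=0$; your conditional ``if it is odd'' is never satisfied, and nothing about surjectivity follows.

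The paper's remedy is exactly to quotient by this free $\Gamma$-action (after projectivizing for compactness), obtaining a smooth map $\phi^{\#}:\cN\to\cP\cM_2$ between compact $24$-manifolds, and then to exhibit an explicit point $(X_0,Y_0,Z_0)$ at which (a) the Jacobian is nonsingular and (b) the fibre over $[(X_0,Y_0)\cdot Z_0]$ is a \emph{single} $\Gamma$-orbit. Step~(b) is the real analogue of Lemma~\ref{H0m} and is the substantive part of the proof: one shows, by an Euler-angle computation, that if the diagonal blocks of $(X,I_3)\cdot Z_0$ commute then $X$ must be monomial. Declaring the basepoint ``generic'' does not bypass this verification; one must actually prove that no additional $\Gamma$-orbits appear in the fibre. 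As a further simplification, the paper first reduces (via the partial-transpose symmetrization in Lemma~\ref{cj:real-00}(iv)) from $\cM_{00}$ to the smaller $\Gamma$-invariant subspace $\cM_2$ of matrices all of whose blocks are symmetric, which keeps the equal-dimension balance while making the explicit Jacobian and commutator computations tractable.
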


\textit{Construction and application of unique locally distinguishable bases in multipartite systems.-} 
We construct a three-dimensional subspace of $\bbC^3\otimes \bbC^n$ whose one-way locally distinguishable basis is unique (ignoring the phase factors). As a contrast, any subspace of $\bbC^2\otimes \bbC^n$ has infinitely many one-way locally distinguishable bases \cite{2011Any}. 

To be specific,
let
\begin{eqnarray}
G=\left(
\begin{array}{ccccccccc}
\frac{2}{3} & 0 & 0 & 0 & 0 & 0 & 0 & 0 & 0 \\
0 & \frac{1}{\sqrt{3}} & 0 & 0 & 0 & \frac{i}{3 \sqrt{3}} & \frac{2}{3 \sqrt{3}} & 0 & 0 \\
0 & 0 & \frac{\sqrt{2}}{3} & 0 & 0 & 0 & 0 & 0 & 0 \\
0 & 0 & 0 & \frac{1}{\sqrt{3}} & 0 & 0 & 0 & 0 & 0 \\
0 & 0 & 0 & 0 & \frac{\sqrt{2}}{3} & 0 & 0 & 0 & 0 \\
0 & 0 & 0 & 0 & 0 & \frac{\sqrt{11}}{3\sqrt{3}} & \frac{3+2 i}{3\sqrt{33}} & 0 & 0 \\
0 & 0 & 0 & 0 & 0 & 0 & \frac{1}{\sqrt{33}} & 0 & 0 \\
0 & 0 & 0 & 0 & 0 & 0 & 0 & \frac{2}{3} & 0 \\
0 & 0 & 0 & 0 & 0 & 0 & 0 & 0 & \frac{1}{\sqrt{3}} \\
\end{array}
\right),
\end{eqnarray}
and
\begin{eqnarray}
\label{eq:psi1}
\notag
&&\ket{\Psi_1}=\ket{1}\otimes \ket{r_1}+\ket{2}\otimes \ket{r_4}+\ket{3}\otimes \ket{r_7},\\
\notag
&&\ket{\Psi_2}=\ket{1}\otimes \ket{r_2}+\ket{2}\otimes \ket{r_5}+\ket{3}\otimes \ket{r_{8}},\\
&&\ket{\Psi_3}=\ket{1}\otimes \ket{r_3}+\ket{2}\otimes \ket{r_6}+\ket{3}\otimes \ket{r_{9}},
\end{eqnarray}
where $\ket{r_i}$ is the $i$-th column vector of $G$. 
A calculation shows
$G^\dagger G=\frac{1}{9}H_0+\frac{1}{3}I_9$ where 
$H_0$ is given in (\ref{c-P}).
Hence $G^\dagger G$ is a dd-matrix and $(G^\dagger G)_B=I_3$. This implies that $\{\ket{\Psi_1},\ket{\Psi_2},\ket{\Psi_3}\}$ is a distinguishable basis under one-way LOCC.
Using Lemma \ref{H0m} in Appendix B, one can verify that  $(U, V)\cdot (G^\dagger G)$ is a dd-matrix only if $U,V\in \cM \SU(3)$. We obtain that the subspace spanned by $\ket{\Psi_i}$'s contains no other locally distinguishable basis. Using any multipartite state $\ket{\a}$, one can further construct the multipartite orthonormal basis $\ket{\Ps_1}\otimes\ket{\a},\ket{\Ps_2}\otimes\ket{\a},\ket{\Ps_3}\otimes\ket{\a}$, which is a unique one-way locally distinguishable basis in the multipartite space.

It is straightforward to see that the three states $\ket{\Psi_1}$, $\ket{\Psi_2}$, and $\ket{\Psi_3}$ are equivalent under local unitary equivalence, and each of them has entanglement approximately 1.53 ebits of the von Neumann entropy of $\tr_A\proj{\Ps_1}$. Further, by calculation one can see that
 every state in the span of the three states has entanglement more than 1.52 ebits. Hence, the entanglement of formation (EOF) of every mixed state whose range is contained in the span is also more than 1.52 ebits \cite{bennett1996mixed}. We don't know whether this EOF is exactly equal to that of the pure state $\ket{\Psi_1}$.

Next, we apply the uniqueness property of some one-way locally distinguishable bases to the task of state transformation under LU-equivalence \cite{bennett2000exact}. We say that two $n$-partite states $\a$ and $\b$ are LU-equivalent when there is a LU gate $U=\otimes^n_{j=1}U_j$ such that $\a=U\b U^\dg$. The LU-equivalent states have common properties useful for quantum-information processing, because they can be locally prepared from each other. Due to the great number of parameters, it is usually not easy to determine whether two states are LU-equivalent. Due to the uniqueness property proven by $U,V\in \cM \SU(3)$, one can see that the non-normalized bipartite state $G^\dg G$ 
is not LU-equivalent to any two-qutrit dd-state $\rho$, which is LU-equivalent to another dd-state via $U, V \not \in \cM\SU(3)$. 
Such a state $\rho$ can be chosen as $\rho=[M_{ij}]$ with all blocks being the diagonal matrices, e.g., the generalized classical states \cite{chen2011detecting}. By a similar reason, determining the LU-equivalence of two multipartite states is likely when one of them has a bipartite reduced density operator with the uniqueness property.   

\textit{Discrimination of multipartite spaces.-} The above results can be extended to multipartite systems assisted with entanglement. Taking the tripartite system as an example, suppose Alice, Bob and Charlie share a combined system of $\bbC^3\otimes \bbC^m\otimes \bbC^n$. By Theorem \ref{3dim}, any three-dimensional subspace $\cP$ has an orthonormal basis $\ket{\eta_1},\ket{\eta_2},\ket{\eta_3}$ written as  $\ket{\eta_i}=\sum_{j=1}^3 \ket{a_{j}}_A\otimes \ket{\rho_{i,j}}_{BC}$, where $\ket{a_{j}}$ is an orthonormal basis of  $A$-system and, for each fixed $j$, the three bipartite states $\ket{\rho_{i,j}}_{BC}$ in
$BC$-system are orthogonal. Alice measures her system by the basis $\{\ket{a_{1}},\ket{a_{2}},\ket{a_{3}}\}$ and tells the result to  Charlie through classical communications. 
Bob teleports his particle to Charlie by using $\log_2m$ entanglement and classical communications so that Charlie owns $\ket{\rho_{i,j}}_{BC}$.
Because the $\ket{\rho_{i,j}}_{BC}$'s are orthogonal to each other, they are distinguishable by Charlie. The above process can be extended to any number of systems as follows.

\begin{theorem}
	Suppose $A_1,\cdots,A_n$ share an $n$-partite system of $\bbC^3\otimes  \bbC^{d_2}\otimes \cdots \otimes  \bbC^{d_{n}}$ with $n\ge 2$.
Then any three-dimensional subspace is locally distinguishable assisted with $\log_2 (d_2...d_{n-1})$ ebits as the entanglement of a $d_2...d_{n-1}$-level maximally entangled state, as well as one-way classical communications from $A_1,\cdots, A_{n-1}$ to $A_n$.
\end{theorem}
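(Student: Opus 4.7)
The plan is to reduce the multipartite distinguishability problem to the bipartite setting already settled in Theorem \ref{3dim}, and then collect the residual multipartite state at $A_n$ using the preshared entanglement via quantum teleportation.

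First, I would regard $A_2, \ldots, A_n$ jointly as a single ``Bob'' of Hilbert-space dimension $d_2 d_3 \cdots d_n$, so that the shared system becomes the qutrit-qudit bipartite system $\bbC^3 \otimes \bbC^{d_2 d_3 \cdots d_n}$. Applying Theorem \ref{3dim} to this bipartite system, the given three-dimensional subspace possesses an orthonormal basis $\ket{\eta_1}, \ket{\eta_2}, \ket{\eta_3}$ distinguishable under one-way LOCC with $A_1$ measuring first. The standard characterization of one-way LOCC distinguishability then yields a decomposition
$$\ket{\eta_i} \;=\; \sum_{j=1}^{3}\ket{a_j}_{A_1} \otimes \ket{\rho_{i,j}}_{A_2\cdots A_n},$$
where $\{\ket{a_j}\}$ is $A_1$'s measurement basis and, for each fixed $j$, the three (possibly unnormalized) multipartite vectors $\ket{\rho_{1,j}}, \ket{\rho_{2,j}}, \ket{\rho_{3,j}}$ are pairwise orthogonal on $A_2 \cdots A_n$.

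Next, $A_1$ projectively measures in the basis $\{\ket{a_j}\}$ and transmits the outcome $j$ to $A_n$. Conditioned on $j$, the remaining parties share one of three known mutually orthogonal pure states $\ket{\rho_{i,j}}$, and the task reduces to distinguishing these three states by LOCC. To accomplish this, I would exploit the preshared entanglement: for each $k \in \{2, \ldots, n-1\}$, let $A_k$ and $A_n$ share a $d_k$-level maximally entangled state, providing $\log_2 d_k$ ebits between them; the combined resource across the cut $A_2 \cdots A_{n-1} \,|\, A_n$ is a $d_2 d_3 \cdots d_{n-1}$-level maximally entangled state carrying $\log_2(d_2 \cdots d_{n-1})$ ebits in total. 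Each such $A_k$ teleports her local share of $\ket{\rho_{i,j}}$ to $A_n$ via standard quantum teleportation, transmitting her classical outcome to $A_n$ through one-way classical communication. Upon receiving all the classical data from $A_1, \ldots, A_{n-1}$, $A_n$ applies the appropriate Heisenberg--Weyl corrections to assemble the full state $\ket{\rho_{i,j}}$ on her own Hilbert space of dimension $d_2 \cdots d_n$, and then performs the projective measurement onto the three orthogonal vectors $\ket{\rho_{1,j}}, \ket{\rho_{2,j}}, \ket{\rho_{3,j}}$ to identify $i$ with certainty.

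The main obstacle is essentially bookkeeping rather than deep mathematics: one must verify that the composition of $A_1$'s measurement, the teleportation protocols executed by $A_2, \ldots, A_{n-1}$, and $A_n$'s final projective measurement constitutes a valid LOCC protocol whose classical communication is strictly one-way into $A_n$, and that $A_n$ can correctly apply the teleportation corrections because she collects every classical outcome. Once these routine ingredients are assembled, the conclusion follows immediately from Theorem \ref{3dim}, and the argument is the natural generalization of the tripartite sketch given in the paragraph preceding the theorem.
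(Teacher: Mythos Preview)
Your proposal is correct and follows essentially the same approach as the paper: reduce to the bipartite case via Theorem~\ref{3dim}, have $A_1$ measure and broadcast, teleport the intermediate parties' shares to $A_n$, and let $A_n$ perform the final orthogonal discrimination. The paper presents exactly this argument in the tripartite case and then states that it extends to $n$ parties, which is precisely the generalization you have written out.
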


\textit{Existence of bipartite four-dimensional indistinguishable subspaces under one-way LOCC.-}  
We shall prove the following theorem.
\begin{theorem}
	\label{4dim}
	There exist four-dimensional indistinguishable bipartite subspaces under one-way LOCC.
\end{theorem}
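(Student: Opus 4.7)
The plan is to apply the matrix--theoretic reformulation underlying the K--M framework, now in the four-dimensional setting, together with a dimension count showing that for $r=4$ generic positive semidefinite matrices fall outside the relevant local-unitary orbit. As a preliminary step I would extend the equivalence of \cite{2007On} from subspace dimension $r=3$ to $r=4$: a four-dimensional subspace of $\bbC^m\ox\bbC^n$ has a one-way LOCC distinguishable basis if and only if the associated PSD matrix $M$ on $\bbC^m\ox\bbC^4$ subject to $M_B=I_4$ is LU-equivalent, under $\SU(m)\times\SU(4)$, to a dd-matrix. This extension is essentially formal; nothing in the K--M derivation depends on the subspace dimension being three.

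Next I would carry out the dimension count in the case $m=3$, $r=4$, so the candidate subspaces live in $\bbC^3\ox\bbC^n$ for $n\ge 4$. The affine subspace $\cM_1=\{M\in\cM:M_B=I_4\}$ has real dimension $12^2-16=128$. The dd-matrices in $\cM_1$ form a linear subspace of real dimension $8+96=104$: the eight parameters come from the three $4\times 4$ diagonal blocks (each diagonal Hermitian, contributing four real entries) minus the four constraints $\sum_{j=1}^{3}M_{jj}=I_4$, while the ninety-six come from the three independent off-diagonal complex $4\times 4$ blocks. Since $\SU(3)\times\SU(4)$ has real dimension $8+15=23$, the smooth action map
\begin{equation}
\SU(3)\times\SU(4)\times\{\text{dd matrices in }\cM_1\}\longrightarrow\cM_1
\end{equation}
has image of real dimension at most $104+23=127<128$, so by Sard's theorem its image has Lebesgue measure zero in $\cM_1$.

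Finally, the positive semidefinite elements of $\cM_1$ form a nonempty open subset of full dimension $128$, so some PSD $M_0\in\cM_1$ lies outside the image of the action map, and the corresponding four-dimensional subspace of $\bbC^3\ox\bbC^n$ admits no one-way LOCC distinguishable basis, proving Theorem~\ref{4dim}. The main obstacle will be turning this Sard-type statement into a concrete, checkable example: one must simultaneously control the open positivity condition and the codimension-one escape from the local-unitary orbit of the dd-matrices. A more hands-on alternative is to take $V\subset\bbC^3\ox\bbC^3$ to be the orthogonal complement of a five-state unextendible product basis; the pigeonhole inequality $3<4$ then forces, for each Alice rank-one POVM outcome, at least one of the four post-measurement Bob-states to vanish, pinning Alice's measurement directions to the finite list of annihilators of individual basis elements of $V$, after which the remaining Bob-orthogonality conditions over-determine the $\Un(4)$ of basis rotations of $V$ and yield an algebraic system that can be verified to have no solution for a suitably chosen UPB.
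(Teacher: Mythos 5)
Your proposal is correct and follows essentially the same route as the paper: a dimension count showing that the smooth map $(X,Y,Z)\mapsto(X,Y)\cdot Z$ from $\SU(3)\times\SU(4)\times\{\text{dd-matrices}\}$ into the ambient space has image of measure zero, combined with the openness of positive definiteness near $\tfrac{1}{3}I_{12}$ to land on a PSD matrix outside the image. The only cosmetic differences are that you count dimensions directly in $\cM_1$ ($127<128$) while the paper passes to the slice $\cM_{00}$ ($119<120$) and then perturbs $\tfrac{1}{3}I_{12}$ explicitly; the undeveloped UPB alternative at the end is unnecessary.
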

\begin{proof}
Let $\dim \cH_A=3,\dim \cH_B=4$. 
Each $M\in \cM$ is partitioned into 9 square blocks of order 4. 
Since the dimensions of the manifolds 
$\SU(3)\times\SU(4)\times\cD_{00}$
and
$\cM_{00}$
are 119 and 120 respectively, the map
$f:\SU(3)\times\SU(4)\times\cD_{00} \to \cM_{00}$ 
defined by
$f(X,Y,Z)=(X, Y)\cdot Z$
is not onto. Let us choose a matrix $K_0\in\cM_{00}$ which is
not in the image of $f$, i.e., $K_0$ is not LU-equivalent to
any dd-matrix. Next, we choose a small $\epsilon>0$ such that
$K:=\frac{1}{3}I_{12} + \epsilon K_0\ge 0$. Moreover, $K$ is not LU-equivalent to any dd-matrix.
Further, $K=P^\dag P$ for a matrix $P$ who has 12 columns.
We now define four bipartite states 
\begin{eqnarray}
	\ket{\Phi_k}=
	\ket{1}\ox\ket{p_k}+
	\ket{2}\ox\ket{p_{k+4}}+\ket{3}\ox\ket{p_{k+8}},
\end{eqnarray}
for $k=1,2,3,4$, where $\ket{p_k}$ is the $k$-th column vector of $P$. One can verify that $K_B=I_4$ and thus these four states
are orthonormal.
Since $K$ is not LU-equivalent to any dd-matrix, we deduce that the subspace spanned by $\ket{\Phi_i}'s$ is indistinguishable under one-way LOCC.
\end{proof}

\textit{Application to classical capacity of quantum channels.-}
Any quantum channel $\Phi$ can be viewed as arising from a unitary interaction $U$ between the system $\cH$ and the environment $\cE$. 
The unitary operator $U$ maps the orthogonal input states to orthogonal ones in $\cH\otimes \cE$.  
Specifically, we can write 
\begin{eqnarray}
\label{channel}
\Phi(\proj{\psi})=\text{Tr}_{\cE}[U(\proj{\psi}\otimes \proj{\epsilon})U^\dagger],
\end{eqnarray}
where $\ket{\epsilon}$ is the initial state of the environment and the partial trace is taken over the environment.  However, the output of the system may not be orthogonal after tracing out the environment, and thus cannot be distinguished perfectly.
It is possible to more reliably distinguish output states of a noisy quantum
channel by using measurements on the environment or on the system. This idea of enhancing the channel corrected capacity has been considered in a number of settings \cite{wer, 2005Correcting, winter2005environment}. The two notions {\em environment-assisted} and {\em environment-assisting} were introduced in \cite{winter2005environment}.
Specifically, assume that the input state $\ket{\psi}$ in (\ref{channel}) varies over the system space $\cH$, the state $U(\ket{\psi}\otimes \ket{\epsilon})$  varies over a subspace $\cV$ of $\cH\otimes \cE$, where $\cE$ denotes the environment space. Suppose $\cV$ has a basis  $U(\ket{\psi_i}\otimes \ket{\epsilon})$ that can be distinguished using one-way LOCC, then we can encode classical information in the system states $\ket{\psi_i}$ and completely recover the information by measuring the environment (resp. system) and followed by a selected measurement on the system (resp. environment). In this setting, we say that the classical environment-assisted (resp. environment-assisting) capacity of the channel is $\log_2 d$, where $d$ is the dimension of $\cH$. 
For a channel with a three-dimensional environment, the dimension of $\cE$ is three. For a qutrit channel, the dimension of $\cH$ is three.
The results stated in the next corollary follow from
Theorem \ref{3dim}. 
\begin{corollary}
The environment-assisted classical capacity of every channel with a three-dimensional environment is at least $\log_2 3$. The environment-assisting classical capacity of any qutrit channel is $\log_2 3$. 
\end{corollary}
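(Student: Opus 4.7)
The plan is to translate each of the two claims into a statement about the local distinguishability of a three-dimensional subspace of $\bbC^3\ox\bbC^n$ for appropriate $n$, and then invoke Theorem \ref{3dim}. The bridge between (\ref{channel}) and our main theorem is the isometry $W:\cH\to\cH\ox\cE$ defined by $W\ket{\psi}:=U(\ket{\psi}\ox\ket{\epsilon})$, whose image $W(\cH)$ is the subspace $\cV$ of $\cH\ox\cE$ along which the channel acts unitarily. Encoding $\log_2 3$ bits with perfect recovery by one-way LOCC between system and environment amounts to choosing three orthonormal inputs $\ket{\psi_1},\ket{\psi_2},\ket{\psi_3}\in\cH$ such that $\{W\ket{\psi_i}\}$ is a one-way LOCC distinguishable basis of a three-dimensional subspace of $\cH\ox\cE$, with the correct party measuring first.

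For the first claim, suppose $\dim\cE=3$. I would pick any three-dimensional subspace $\cW\sue\cH$ and consider $W(\cW)$, a three-dimensional subspace of $\cH\ox\cE$. Viewing this as a subspace of $\cE\ox\cH\cong\bbC^3\ox\bbC^{\dim\cH}$, Theorem \ref{3dim} furnishes an orthonormal basis $\{W\ket{\psi_i}\}$ of $W(\cW)$ distinguishable by one-way LOCC with the environment party measuring first. This is exactly the environment-assisted protocol: measure $\cE$, classically communicate the outcome to the system, perform a conditional measurement on $\cH$ and read off $i$. Hence the environment-assisted classical capacity is at least $\log_2 3$.

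For the second claim, suppose $\dim\cH=3$. Then $W$ is an isometry from $\bbC^3$ into $\bbC^3\ox\cE$, so $\cV=W(\cH)$ is a three-dimensional subspace of $\bbC^3\ox\cE$. Applying Theorem \ref{3dim} with the system playing the role of the qutrit party gives an orthonormal basis of $\cV$ distinguishable by one-way LOCC with the system measured first. The corresponding inputs $\ket{\psi_i}\in\cH$ realize the environment-assisting protocol and yield capacity $\ge\log_2 3$; the reverse inequality is immediate since the input alphabet is carried by a three-dimensional Hilbert space.

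The only substantive point to verify is the matching of the two tensor factors of Theorem \ref{3dim} with the ordering of measurements required in each setting: the factor of dimension three must be the one measured \emph{first}, which in the first case is $\cE$ and in the second case is $\cH$. Both matches are automatic, so beyond this bookkeeping no new ideas are needed; the main obstacle was already overcome in establishing Theorem \ref{3dim}, and the corollary is essentially a dictionary between channel-discrimination language of \cite{winter2005environment} and the subspace-distinguishability language of our main result.
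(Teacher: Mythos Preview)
Your argument is correct and matches the paper's own reasoning: the paper simply states that the corollary ``follow[s] from Theorem \ref{3dim}'' after the paragraph explaining that encoding $\log_2 d$ bits amounts to finding a one-way LOCC distinguishable basis of the image subspace $\cV=U(\cH\ox\ket{\epsilon})\subseteq\cH\ox\cE$, with the qutrit factor (environment in the first claim, system in the second) measured first. Your explicit use of the isometry $W$ and the careful check that the qutrit side is the one measured first in each case spell out precisely what the paper leaves implicit, so there is nothing to add.
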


\textit{Application to quantum-classical  boundary.-} The study of nonlocality without entanglement in terms of correlations such as discord has attracted attention in the past decades \cite{2001Hend,2010dakic,2011luo}. The nonlocality can be manifested by the 
multipartite states lying near the quantum-classical boundary, namely the so-called  generalized classical and classical states \cite{chen2011detecting}. For convenience we shortly review them as follows. Let 
$\{\ket{\phi(\vec{i})}\}=\{\ket{\phi^{(1)}_{i_1}\phi^{(2)}_{i_2}\dots\phi^{(N)}_{i_N}}\}$ be a basis of product states. It is known that a multipartite separable state $\r$ is a convex sum of product states. We refer to $\r$ as a \textit{generalized classical (resp. classical) state for the $k^{th}$ system} if $\r$ is diagonal in a product state basis and the $\ket{\phi^{(k)}_{i_k}}$ are linearly independent (resp. orthonormal). Further, 
the state $\rho$ is \textit{fully generalized classical (resp. fully classical)}  
if it is diagonal in every system with a linearly independent (resp. orthonormal) basis. The generalized classical and classical states can be both efficiently detected by using existing semidefinite programming (SDP) \cite{chen2011detecting}. Given any two-qutrit state $\a$ acting on a system $\cH_{AB}$, by using Theorem \ref{3dim} we deduce that there exists an orthonormal basis $\{\ket{a_j}\}$ of system $A$ such that $\alpha=\sum^3_{i,j=1}\ket{a_i}\bra{a_j}\otimes M_{ij}$ where $M_{11}, M_{22},M_{33}$ are simultaneously congruent to diagonal matrices. Hence
there exists an entanglement breaking channel $\L:\a\ra \b=\sum^3_{j=1}(\proj{a_j}\otimes I_3)\a(\proj{a_j}\otimes I_3)$, such that $\b$ is a classical state w.r.t. the system $A$ and a generalized classical state w.r.t. the system $B$. In particular if $\a_B$ is the maximally mixed state then $\b$ becomes a fully classical state. 
In this sense, the quantumness of both of two distant systems can be totally removed by a local operation of only one system. Further, the full classicality is related to the tasks of deterministic local broadcasting and deterministic non-disruptive local state identification \cite{chen2011detecting}. In addition, the length of a generalized classical state equals its rank, and it represents the minimum cost of generating the state \cite{chen2013dimensions}. A similar argument may be extended to the multipartite system.   

\textit{Summary and outlook.-} We have shown that a distinguishable basis under one-way LOCC exists in any three-dimensional subspace of qutrit-qudit real or complex systems. 
We have applied the results to quantum information issues such as the locally distinguishability of multipartite spaces, the corrected capacity of quantum channels, and the quantum-classical boundary.
There are several questions arising from this letter. The first question is how to construct analytical expressions for such a basis. 
Another question is whether any bipartite three-dimensional subspace is distinguishable under one-way LOCC. We conjecture that the answer is yes. This would imply that the environment-assisted classical capacity of every qudit quantum channel is at least $\log_2 3$.
The corresponding mathematical conjecture asserts the following:
if $M\ge 0$ is a matrix of order $3m$ $(m\ge 4)$, partitioned into $m^2$ blocks $M_{ij}$ of order 3, with $M_B=I_3$, then $M$ is LU-equivalent to a dd-matrix. Finally, we have shown the existence of bipartite four-dimensional indistinguishable subspaces under one-way LOCC. However, the question: does such an indistinguishable subspace exist under a general LOCC protocol, remains open? 

We thank Li Yu and Nengkun Yu for useful comments. ZWS and LC were supported by the NNSF of China (Grant No. 11871089). 

\bibliographystyle{unsrt}
\bibliography{km}

\begin{thebibliography}{10}

\bibitem{1964Bell}
J.~S. Bell.
\newblock On the einstein podolsky rosen paradox.
\newblock {\em Physics PhysiqueFizika}, 1:195--200.

\bibitem{PhysRev47777}
A.~Einstein, B.~Podolsky, and N.~Rosen.
\newblock Can quantum-mechanical description of physical reality be considered
  complete?
\newblock {\em Phys. Rev.}, 47:777--780, May 1935.

\bibitem{PhysRevA59}
C.~H. Bennett, D.P. DiVincenzo, C.~A. Fuchs, T.~Mor, E.~Rains, P.~W. Shor, J.A.
  Smolin, and W.K. Wootters.
\newblock Quantum nonlocality without entanglement.
\newblock {\em Phys. Rev. A}, 59:1070--1091, Feb 1999.

\bibitem{PhysRevLett96}
M.~Hayashi, D.~Markham, M.~Murao, M.~Owari, and S.~Virmani.
\newblock Bounds on multipartite entangled orthogonal state discrimination
  using local operations and classical communication.
\newblock {\em Phys. Rev. Lett.}, 96:040501, Feb 2006.

\bibitem{PhysRevLett109}
N.K. Yu, R.Y. Duan, and M.S. Ying.
\newblock Four locally indistinguishable ququad-ququad orthogonal maximally
  entangled states.
\newblock {\em Phys. Rev. Lett.}, 109:020506, Jul 2012.

\bibitem{PhysRevLett100}
R.Y. Duan, Y.~Feng, and M.S. Ying.
\newblock Local distinguishability of multipartite unitary operations.
\newblock {\em Phys. Rev. Lett.}, 100:020503, Jan 2008.

\bibitem{shi2022k}
F.~Shi, Y.~Shen, L.~Chen, and X.D. Zhang.
\newblock k-uniform states in heterogeneous systems.
\newblock {\em IEEE Trans. Inf. Theory}, 68(5):3115--3129, 2022.

\bibitem{shi2022strongly}
F.~Shi, M.S. Li, M.Y. Hu, L.~Chen, M.H. Yung, Y.L. Wang, and X.D. Zhang.
\newblock Strongly nonlocal unextendible product bases do exist.
\newblock {\em Quantum}, 6:619, 2022.

\bibitem{cmp2009}
W.~Matthews and A.~Winter.
\newblock On the chernoff distance for asymptotic locc discrimination of
  bipartite quantum states.
\newblock {\em Commun. Math. Phys.}, 285(1):161--174 -- 174, 2009.

\bibitem{2001Local}
J.~Walgate, A.~J. Short, L.~Hardy, and V.~Vedral.
\newblock Local distinguishability of multipartite orthogonal quantum states.
\newblock {\em Phys. Rev. Lett.}, 85(23):4972--4975, 2001.

\bibitem{2011Any}
N.K. Yu, R.Y. Duan, and M.S. Ying.
\newblock Any $2\times n$ subspace is locally distinguishable.
\newblock {\em Phys. Rev. A}, 84(1):012304, 2011.

\bibitem{watrous2005bipartite}
J.~Watrous.
\newblock Bipartite subspaces having no bases distinguishable by local
  operations and classical communication.
\newblock {\em Phys. Rev. Lett}, 95(8):080505, 2005.

\bibitem{duan2009distinguishability}
R.Y. Duan, Y.~Feng, Y.~Xin, and M.S. Ying.
\newblock Distinguishability of quantum states by separable operations.
\newblock {\em IEEE Trans. Inf. Theory}, 55(3):1320--1330, 2009.

\bibitem{2007On}
C.~King and D.~Matysiak.
\newblock On the existence of locc-distinguishable bases in three-dimensional
  subspaces of bipartite $3\times n$ systems.
\newblock {\em J. Phys. A: Math. Theor.}, 2007.

\bibitem{chen2018locally}
L.~Chen.
\newblock On the locally distinguishable three-dimensional subspace of
  bipartite systems.
\newblock {\em J. Phys. A: Math. Theor.}, 51(14):145301, 2018.

\bibitem{kribs2019quantum}
D.W. Kribs, C.~Mintah, M.~Nathanson, and R.~Pereira.
\newblock Quantum error correction and one-way locc state distinguishability.
\newblock {\em J. Math. Phys.}, 60(3), 2019.

\bibitem{2005Correcting}
P.~Hayden and C.~King.
\newblock Correcting quantum channels by measuring the environment.
\newblock {\em Quantum Inf. Comput.}, 5(2):156--160, 2005.

\bibitem{chen2011detecting}
L.~Chen, E.~Chitambar, K.~Modi, and G.~Vacanti.
\newblock Detecting multipartite classical states and their resemblances.
\newblock {\em Phys. Rev. A}, 83(2):020101, 2011.

\bibitem{bennett1996mixed}
C.~H. Bennett, D.~P DiVincenzo, J.~A Smolin, and W.K. Wootters.
\newblock Mixed-state entanglement and quantum error correction.
\newblock {\em Phys. Rev. A}, 54(5):3824, 1996.

\bibitem{bennett2000exact}
C.~H Bennett, S.~Popescu, D.~Rohrlich, J.~A Smolin, and A.~V Thapliyal.
\newblock Exact and asymptotic measures of multipartite pure-state
  entanglement.
\newblock {\em Phys. Rev. A}, 63(1):012307, 2000.

\bibitem{wer}
M.~Gregoratti and R.F. Werner.
\newblock On quantum error-correction by classical feedback in discrete time.
\newblock {\em J. Math. Phys.}, 45(7):2600--2612, 2004.

\bibitem{winter2005environment}
A.~Winter.
\newblock On environment-assisted capacities of quantum channels.
\newblock {\em e-print arXiv:quant-ph/0507045}, 2005.

\bibitem{2001Hend}
L.~Henderson and V.~Vedral.
\newblock Classical, quantum and total correlations.
\newblock {\em Journal of Physics A General Physics}, 34(35):6899--6905, 2001.

\bibitem{2010dakic}
B.~Daki\ifmmode~\acute{c}\else \'{c}\fi{}, V.~Vedral, and \ifmmode
  \check{C}\else~\v{C}. Brukner.
\newblock Necessary and sufficient condition for nonzero quantum discord.
\newblock {\em Phys. Rev. Lett.}, 105:190502, Nov 2010.

\bibitem{2011luo}
S.L. Luo and S.S. Fu.
\newblock Measurement-induced nonlocality.
\newblock {\em Phys. Rev. Lett.}, 106:120401, Mar 2011.

\bibitem{chen2013dimensions}
L.~Chen and D.Z. Djokovic.
\newblock Dimensions, lengths, and separability in finite-dimensional quantum
  systems.
\newblock {\em J. Math. Phys.}, 54(2), 2013.

\bibitem{MR206917520040101}
A.~A. Kirillov.
\newblock {\em Lectures on the orbit method.}
\newblock Graduate Studies in Mathematics, 64. American Mathematical Society,
  Providence, RI, 2004.

\bibitem{MR192913620020101}
V.~Guillemin, V.~Ginzburg, and Y.~Karshon.
\newblock {\em Moment maps, cobordisms, and Hamiltonian group actions.}
\newblock Mathematical Surveys and Monographs, 98. American Mathematical
  Society, Providence, RI, 2002.

\bibitem{PhysRevD.38.1994}
J.~B. Bronzan.
\newblock Parametrization of su(3).
\newblock {\em Phys. Rev. D}, 38:1994--1999, Sep 1988.

\bibitem{1993Differentiable}
L.~Conlon.
\newblock {\em Differentiable Manifolds: A First Course}.
\newblock Differentiable manifolds : a first course, 1993.

\bibitem{MR034878119740101}
V.~Guillemin and A.~Pollack.
\newblock {\em Differential topology.}
\newblock Prentice-Hall, Inc., Englewood Cliffs, N.J., 1974.

\end{thebibliography}
\clearpage 
\onecolumngrid
\appendix
\section{\label{AppA}APPENDIX A: The proof of K-M conjecture}
We first prove the following Lemma.
\bl
\label{cj:main}
Each of the following three assertions is equivalent to 
the K-M conjecture: 

(i) each matrix in $\cM_0$ is LU-equivalent to a dd-matrix;

(ii) each matrix in $\cM_1$ is LU-equivalent to a dd-matrix;

(iii) each matrix in $\cM_{00}$ is LU-equivalent to a dd-matrix.

\el
\bpf
{\em K-M conjecture} $\Longrightarrow$ (i): Let $M_0\in\cM_0$ be arbitrary.  Choose small $\ve>0$ such that 
$M_1:=I_9/3+\ve M_0>0$. 
By the {\em K-M conjecture} $M_1$ is LU-equivalent to a dd-matrix. Hence, 
$M_0=\ve^{-1}(M_1-I_9/3)$ is also LU-equivalent to a dd-matrix.

(i) $\Longrightarrow$ (ii): This is true because 
$\cM_1=\cM_0+I_9/3$. 

The implications
(ii) $\Longrightarrow$ {\em K-M conjecture} and 
(i) $\Longrightarrow$ (iii) are trivial.  

(iii) $\Longrightarrow$ (ii): Let $M\in\cM_0$ be arbitrary. Then 
$N:=M-M_A\ox I_3/3 \in \cM_{00}$. By (iii), $N$ is LU-equivalent to a dd-matrix. Hence, the same is true for $M=N+M_A\ox I_3/3$.
\epf

\textbf {Proof of K-M conjecture}
	
	In view of Lemma \ref{cj:main}, it suffices to prove that each matrix in $\cM_{00}$ is LU-equivalent to a dd-matrix. Recalling the definition of $\cD_{00}$, we have to prove that the map
	\begin{eqnarray}
	\label{cf}
	f: \SU(3)\times \SU(3)\times \cD_{00}\rightarrow \cM_{00}
	\end{eqnarray}
	defined by $f(U,V,H)=(U,V)\cdot H$ is onto. 
	Note that $\SU(3) \times \SU(3) \times \cD_{00}$ and $\cM_{00}$ are smooth manifolds of dimensions $8+8+52=68$ and 64, respectively. Let $\cP\cD_{00}$ and $\cP\cM_{00}$ denote the real projective spaces associated with $\cD_{00}$ and $\cM_{00}$, respectively. For nonzero $H \in \cM_{00}$, we denote by $[H]$ the 1-dimensional subspace of $\cM_{00}$ viewed as a point of $\cP\cM_{00}$.
	Since $\cD_{00}$ and $\cM_{00}$ are real vector spaces and $f$ is linear in $H$, it induces a smooth map
	\begin{eqnarray} \label{c-phi}
	\notag
	&&\phi: \SU(3) \times \SU(3) \times \cP\cD_{00} \to \cP\cM_{00}, \\
	&&\phi(U,V,[H]):=[f(U,V,H)]=[(U,V)\cdot H].
	\end{eqnarray}
	To prove the theorem, it suffices to show that $\phi$ is onto. 
	
	The manifolds $\SU(3) \times \SU(3) \times \cP\cD_{00}$ and $\cP\cM_{00}$ are compact with no boundary and have dimensions 67 and 63, respectively. 
		We denote by $\Gamma$ the subgroup $\cM \SU(3)\times \cM \SU(3)$
	of $\SU(3)\times \SU(3)$.
	For convenience we shall write $\g\in\G$ as the ordered
	pair $(\g_1,\g_2)$ where $\g_1,\g_2\in\cM \SU(3)$.
	One can verify that the subspace $\cD_{00} \subseteq \cM_{00}$ is $\Gamma$-invariant. Consequently $\G$ acts on the 
	manifold $\SU(3) \times \SU(3) \times \cD_{00}$ as follows: 
	\begin{eqnarray}
	\g\bullet(U,V,H):=
	(U\g_1^\dag,V\g_2^\dag,(\g_1,\g_2)\cdot H).
	\end{eqnarray}
	This action of $\G$ induces an action on the manifold 
	$\SU(3) \times \SU(3) \times \cP\cD_{00}$ which we will denote by the same symbol. Thus we have
	\begin{eqnarray}
	\g\bullet(U,V,[H]):=
	(U\g_1^\dag,V\g_2^\dag,[(\g_1,\g_2)\cdot H]).
	\end{eqnarray}
	
	It is straightforward to verify that this action of $\G$ is free, which means that if 
	$\g\in\G$ fixes a point $(U,V,[H])$ then $\g=(I_3,I_3)$. The corresponding quotient space (also known as the orbit space) 
	$\cN:=( \SU(3) \times \SU(3) \times \cP\cD_{00} )/\G$
	is also a smooth compact manifold, see e.g.
	\cite[Appendix II, Proposition 2, p.229]{MR206917520040101} or \cite[Appendix II, pp.173-196]{MR192913620020101}. 
	We shall denote by $(U,V,[H])^{\#}$ the image in $\cN$ of a point $(U,V,[H])\in\SU(3)\times\SU(3)\times\cP\cD_{00}$. 
	Further, the dimension of $\cN$ is $63$. Since the map 
	$\phi$ is smooth and constant on each $\Gamma$-orbit, it induces a smooth map 
	\begin{eqnarray}
	\notag
	\label{cphin}
	&&\phi^{\#}: \cN \to \cP\cM_{00},\\
	&&\phi^{\#}((U,V,[H])^{\#}):=[(U,V)\cdot H].
	\end{eqnarray}
	
	Let $P=(U_0,V_0,[H_0]) \in \SU(3) \times \SU(3) \times 
	\cP\cD_{00}$, 
	where
	\begin{eqnarray} \label{c-P}
	U_0=V_0=\frac{1}{8} \bma 6 & 4 & 2\sqrt{3} \\
	-1 & 6 & -3\sqrt{3} \\
	-3\sqrt{3} & 2\sqrt{3} & 5 \ema,~ 
	H_0=
	\bma 1&0&0&0&0&0&0&0&0\\
	0&0&0&0&0&i&2&0&0\\
	0&0&-1&0&0&0&0&0&0\\
	0&0&0&0&0&0&0&0&0\\
	0&0&0&0&-1&0&0&0&0\\
	0&-i&0&0&0&1&1&0&0\\
	0&2&0&0&0&1&-1&0&0\\
	0&0&0&0&0&0&0&1&0\\
	0&0&0&0&0&0&0&0&0
	\ema,
	\end{eqnarray}
	and let $Q:=\phi(P)=[(U_0,V_0)\cdot H_0]$. 
	
	There is a nice parametrization \cite{PhysRevD.38.1994} of $\SU(3)$ in terms of 8 angles:
	$\theta_i$ (i=1,2,3) and $\phi_i$ (i=1,2,\ldots,5).
	Considering the map $f$ defined in (\ref{cf}), we used 16  angles (eight for each copy of $\SU(3)$) as coordinates and computed the rank of the Jacobian matrix of $f$ at the point 
	$(U_0,V_0,H_0)$. This rank is 64, and so this point is a regular point of $f$. Consequently, $P$ is a regular point of $\phi$, and $P^{\#}$ is a regular point of $\phi^{\#}$. 
	
	We claim that the equation 
	$\phi^{\#}((U,V,[H])^{\#})=Q$ has only one solution, namely 
	$P^{\#}$. Equivalently, we claim that all
	solutions of the equation $\phi(U,V,[H])=Q$ are exactly all
	points of the $\G$-orbit of $P$.
	
	The equation $\phi(U,V,[H])=Q$ can be written as 
	$[(U,V)\cdot H]=[(U_0,V_0)\cdot H_0]$. By multiplying $H$ by
	a suitable nonzero real number, we may assume that the
	equality $(U,V)\cdot H=(U_0,V_0)\cdot H_0$ holds. Finally
	we can rewrite this equation as 
	$(U_0^\dag U, V_0^\dag V)\cdot H=H_0$. By using Lemma \ref{H0m}
	in Appendix B, we deduce that $\g_1:=U_0^\dag U$ 
	and $\g_2:=V_0^\dag V$ are monomial matrices. It follows that
	$(U,V,[H])=(U_0\g_1,V_0\g_2,[(\g_1^\dag,\g_2^\dag)\cdot H_0])\in\G\bullet P$. Thus our claim is proved.

	This means that there is only one point on the manifold $\cN$ which satisfies the equation
	$\phi^{\#}(\G\bullet (U,V,[H]))=[(U_0,V_0)\cdot H_0]$. Hence
	the point $[(U_0,V_0)\cdot H_0]$ is a regular value of $\phi^{\#}$. Consequently the map $\phi^{\#}$ is onto, see  \cite[Corollary 3.9.6, p. 96]{1993Differentiable} or 
	\cite[Chapter 2, Section 3]{MR034878119740101}. 
	This implies that $\phi$ is also onto.

\section{APPENDIX B: The proof of a necessary lemma  supporting Theorem \ref{3dim}}
\label{app:prmo}

Our objective here is to prove the following lemma 
which was used in the proof of Theorem \ref{3dim}. 
Let us recall that $P:=(U_0,V_0,[H_0])\in \SU(3)\times\SU(3)\times\cP\cD_{00}$ 
	where the matrices $U_0,V_0,H_0$ are specified in \eqref{c-P}, and that $Q:=\phi(P)=[(U_0,V_0)\cdot H_0]$. 
\begin{lemma}
	\label{H0m}
Under these assumptions, the equation 
	$\phi(U,V,[H])=Q$ implies that $(U,V)\in\G$.
\end{lemma}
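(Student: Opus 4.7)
\textit{Plan.} Set $\gamma_1:=U_0^\dagger U$ and $\gamma_2:=V_0^\dagger V$. Interpreting the lemma's conclusion $(U,V)\in\G$ as the statement $\gamma_1,\gamma_2\in\cM\SU(3)$ (in agreement with the way the lemma is invoked in the main proof of Theorem~\ref{3dim}), I aim to show that from the hypothesis $\phi(U,V,[H])=Q$ one can force both $\gamma_1$ and $\gamma_2$ to be monomial. Absorbing the projective scalar into $H$, the equation reads $H=(\gamma_1^\dagger,\gamma_2^\dagger)\cdot H_0$, so the task reduces to the following: if this $H$ is a dd-matrix, then $\gamma_1,\gamma_2\in\cM\SU(3)$. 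A direct block expansion yields $H_{kk}=\gamma_2^\dagger X_k\gamma_2$ with
\begin{equation*}
X_k\;:=\;\sum_{i,j=1}^{3}\overline{(\gamma_1)_{ik}}(\gamma_1)_{jk}(H_0)_{ij}\;=\;(\langle c_k|\otimes I_3)\,H_0\,(|c_k\rangle\otimes I_3),
\end{equation*}
where $c_k$ denotes the $k$-th column of $\gamma_1$. The dd condition on $H$ is equivalent to requiring that $\gamma_2$ simultaneously diagonalize the three Hermitian matrices $X_1,X_2,X_3$.

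\textit{Step 1 (showing $\gamma_1\in\cM\SU(3)$).} Simultaneous diagonalizability forces pairwise commutativity of the $X_k$; since $\sum_k X_k=\sum_i(H_0)_{ii}=0$ (because $H_0\in\cM_{00}$), only one commutator identity $[X_1,X_2]=0$ is independent. Substituting the explicit $H_0$ from (\ref{c-P}) expresses each $X_k$ as a $3\times 3$ Hermitian matrix quadratic in the entries of $c_k$, whose off-diagonal entries carry the characteristic unequal weights $\{1,2,i\}$ inherited from the off-diagonal blocks $(H_0)_{23}$, $(H_0)_{13}$, $(H_0)_{12}$. I expand $[X_1,X_2]=0$ entry by entry, combine the resulting polynomial identities with the orthonormality relations among the columns of $\gamma_1$, and argue that each column $c_k$ can contain at most one nonzero coordinate, i.e., $\gamma_1\in\cM\SU(3)$.

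\textit{Step 2 (showing $\gamma_2\in\cM\SU(3)$).} Once $\gamma_1$ is monomial, each column $c_k=\zeta_k e_{\sigma(k)}$ for some permutation $\sigma$ and phase $\zeta_k$. Direct substitution gives $X_k=(H_0)_{\sigma(k)\sigma(k)}$, which is one of the diagonal matrices $\diag(1,0,-1),\diag(0,-1,1),\diag(-1,1,0)$. Each of these has three distinct eigenvalues $\{1,0,-1\}$, so any unitary $\gamma_2$ simultaneously diagonalizing all three must send the standard basis to itself up to permutation and phase, i.e., $\gamma_2\in\cM\SU(3)$. Combined with Step~1, this yields the desired conclusion.

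\textit{Main obstacle.} The substantive work is Step~1: a polynomial system in the nine complex entries of $\gamma_1$, with the $\SU(3)$ unitarity relations tangling the three columns together. The conceptual risk is that a non-trivial rotation within a column of $\gamma_1$ might be compensated by a matched rotation on the $\gamma_2$-side, keeping each $H_{kk}$ diagonal. The asymmetric coefficients $\{1,2,i\}$ appearing as off-diagonal entries of $(H_0)_{23},(H_0)_{13},(H_0)_{12}$ are evidently engineered precisely to rule out any such compensating symmetry, and extracting the monomial conclusion will require a case analysis on which entries of $\gamma_1$ vanish, most likely proceeding one commutator entry at a time and cascading the implications column by column.
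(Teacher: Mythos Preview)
Your reduction is exactly the paper's: rewrite the hypothesis as $(\gamma_1,\gamma_2)\cdot H=H_0$ with $H\in\cD_{00}$, observe that the three diagonal blocks $X_k$ of $(\gamma_1^\dagger,I_3)\cdot H_0$ must be simultaneously diagonalized by $\gamma_2$, and once $\gamma_1$ is monomial deduce $\gamma_2\in\cM\SU(3)$ from the fact that each $(H_0)_{ii}=\diag(\ast,\ast,\ast)$ has three distinct eigenvalues. Your Step~2 is complete and matches the paper verbatim.

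The gap is that Step~1 is only a plan. You have correctly isolated $[X_1,X_2]=0$ as the governing constraint and correctly anticipated that a case analysis is needed, but you have not carried it out --- and you say so yourself in the ``Main obstacle'' paragraph. The paper's execution is concrete and worth contrasting with your sketch: rather than working in the raw entries $u_{ij}$ with the unitarity constraints imposed externally, it parametrizes the unknown $\SU(3)$ matrix by eight angles $\theta_1,\theta_2,\theta_3,\phi_1,\dots,\phi_5$, so that monomiality becomes the clean condition $\theta_i\in\{0,\pi/2\}$ and the unitarity relations are automatic. The commutator entries then factor trigonometrically (for instance $s_{11}=\tfrac{3i}{2}\cos^2\theta_1\sin\theta_1\sin2\theta_2\sin2\theta_3\sin\psi$ with $\psi=\phi_1+\phi_2-\phi_3+\phi_4+\phi_5$), and the argument proceeds by three nested claims: (a)~at least one $\theta_i\in\{0,\pi/2\}$; (b)~if two are, so is the third; (c)~if one is, then all are. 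Your proposed entry-by-entry route is in principle equivalent but in practice harder, because the orthonormality relations among the columns of $\gamma_1$ are implicit rather than built into the coordinates; the angle parametrization is precisely what makes the factorizations visible and the cascade tractable.
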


\begin{proof}
	A set of Hermitian matrices can be simultaneously diagonalized by a unitary matrix if and only if they commute with each other. Let us find all $U\in \SU(3)$ such that the three diagonal blocks of 
	$(U,I_3)\cdot H_0$ commute. We shall refer to such $U$ as {\em good} matrices. Since the sum of these three blocks is zero, it suffices to make the first two blocks commute.
	
	We can write any $U\in \SU(3)$ as follows \cite{PhysRevD.38.1994}, 
	\begin{eqnarray}
		\label{U}
		U:=\bma u_{11}&u_{12}&u_{13}\\
		u_{21}&u_{22}&u_{23}\\
		u_{31}&u_{32}&u_{33}
		\ema,
	\end{eqnarray}
	where
	\begin{eqnarray}
	\label{u}
	\notag
	&&u_{11}=\cos{\theta_1}\cos{\theta_2}e^{i\phi_1}, \\
	\notag
	&&u_{12}=\sin{\theta_1}e^{i\phi_3}, \\
	\notag
	&&u_{13}=\cos{\theta_1}\sin{\theta_2}e^{i\phi_4}, \\
	\notag
	&&u_{21}=\sin{\theta_2}\sin{\theta_3}e^{-i\phi_4-i\phi_5}-\sin{\theta_1}\cos{\theta_2}\cos{\theta_3}e^{i\phi_1+i\phi_2-i\phi_3},\\
	\notag
	&&u_{22}=\cos{\theta_1}\cos{\theta_3}e^{i\phi_2}, \\
	\notag
	&&u_{23}=-\cos{\theta_2}\sin{\theta_3}e^{-i\phi_1-i\phi_5}
	-\sin{\theta_1}\sin{\theta_2}\cos{\theta_3}e^{i\phi_2-i\phi_3+i\phi_4}, \\
	\notag
	&&u_{31}=-\sin{\theta_1}\cos{\theta_2}\sin{\theta_3}e^{i\phi_1-i\phi_3+i\phi_5}
	-\sin{\theta_2}\cos{\theta_3}e^{-i\phi_2-i\phi_4}, \\
	\notag
	&&u_{32}=\cos{\theta_1}\sin{\theta_3}e^{i\phi_5}, \\
	&&u_{33}=\cos{\theta_2}\cos{\theta_3}e^{-i\phi_1-i\phi_2}
	-\sin{\theta_1}\sin{\theta_2}\sin{\theta_3}e^{-i\phi_3+i\phi_4+i\phi_5},
	\end{eqnarray}
	where
	$0\le \theta_1,\theta_2,\theta_3\le \frac{\pi}{2}$, 
	$0\le \phi_1,\phi_2,\phi_3,\phi_4,\phi_5\le 2\pi$. 
	It is easy to verify that $U$ is a monomial matrix if 
	$\{\theta_1,\theta_2,\theta_3\} \subseteq \{0,\pi/2\}$.

	Denote by $R=[r_{ij}]$ the product $D_1D_2$ and by $S=[s_{ij}]$ the commutator $D_1D_2-D_2D_1$ of the first and the second diagonal blocks, $D_1$ and $D_2$ respectively, of the matrix
	$(U,I_3)\cdot H_0$. 
	Note that $U$ is good if and only if $S=0$ or, equivalently, 
	$R$ is Hermitian.
	Since $D_1$ and $D_2$ are Hermitian matrices, $S$ is skew-Hermitian. 
	One can easily verify that $s_{33}=0$. Since $S$ has trace 0, we have $s_{22}=-s_{11}$.
	
	From now on in this proof we assume that $U$ is good and
we will prove that $U\in\cM\SU(3)$.
	For convenience we set 
	$\psi:=\phi_1+\phi_2-\phi_3+\phi_4+\phi_5$.
	
	Our first claim is that at least one of the three angles $\theta_i$ is equal to 0 or $\pi/2$. A calculation shows that 
	\begin{eqnarray}
		s_{11}=\frac{3i}{2}\cos^2 \theta_1 \sin\theta_1 \sin 2\theta_2
		\sin 2\theta_3 \sin\psi.
	\end{eqnarray}
	Hence the claim holds unless $\sin\psi=0$. We may now assume 
	that $\theta_1,\theta_2 \in (0,\pi/2)$ and $\sin\psi=0$.
	Another calculation shows that 
	\begin{eqnarray}
		\notag
		&&\sin(\phi_1-\phi_4) \text{Re}(r_{12}-r_{21}) +
		\cos(\phi_1-\phi_4) \text{Im}(r_{12}+r_{21}) \\
		=&&-\frac{1}{4}\cos\theta_1 \sin 2\theta_1 \sin 2\theta_3 \cos\psi
		-\sin\theta_1 \sin 2\theta_3 (1-3\cos^2 \theta_1 
		\sin^2 \theta_2) \sin\psi.
	\end{eqnarray}
	Since $R$ is a Hermitian matrix the LHS vanishes, and since 
	$\theta_1 \in (0,\pi/2)$ and $\sin\psi=0$ it follows that 
	$\sin 2\theta_3 =0$. Hence our claim is true.
	
	Our second claim is that if two of the angles $\theta_i$ belong to $\{0,\pi/2\}$ then so does the third, and so 
	$U\in\cM\SU(3)$. There are 12 cases to consider. We shall prove that the claim holds in the case $\theta_1=\theta_2=0$. The proofs in the other cases are similar and are omitted.
	By setting $\theta_1=\theta_2=0$ in $S$, a calculation shows that
	\begin{eqnarray}
		s_{13}=-\sin 2\theta_3 \; e^{-i(\phi_1+\phi_2+\phi_5)}.
	\end{eqnarray}
	Since $U$ is good, $S=0$ and we must have 
	$\sin 2\theta_3=0$, i.e., $\theta_3$ is also 0 or $\pi/2$. Hence $U\in\cM\SU(3)$.
	
	Our third claim is that if at least one $\theta_i$ is $0$ or $\pi/2$ then $U\in\cM\SU(3)$. 
	
	There are six cases to consider: $\theta_i=0$ or $\theta_i=\pi/2, (i=1,2,3)$. We shall give the proofs for the two cases with $i=1$. We omit the proofs in the other four cases as they are similar. 
	
	Suppose first that $\theta_1=0$. 
By setting $\theta_1=0$ in $S$, a computation shows that
	\begin{eqnarray}
s_{12}=-\sin 2\theta_2 \cos 2\theta_3 e^{i(\phi_4-\phi_1)}.
	\end{eqnarray}
On the other hand, by setting $\theta_1=0$ and $\theta_3=\pi/4$ in $S$ we find that
	\begin{eqnarray}
		s_{23}=\frac{1}{2} \sin \theta_2 (i-2\cos^2 \theta_2)
		e^{-i(\phi_2+\phi_4+\phi_5)}.
	\end{eqnarray}
Since $S=0$, the above expressions for $s_{12}$ and $s_{23}$ imply that $\sin 2\theta_2=0$. Our second claim now shows that $U\in\cM\SU(3)$.
	
	Next suppose that $\theta_1=\pi/2$. By setting $\theta_1=\pi/2$ in $S$, a computation shows that
	\begin{eqnarray}
	s_{12}=\sin 2\theta_2 \cos 2\theta_3 e^{i(\phi_4-\phi_1)}
		+\cos^2\theta_2 \sin 2\theta_3 
		e^{-i(2\phi_1+\phi_2-\phi_3+\phi_5)}
		-\sin^2\theta_2 \sin 2\theta_3 
		e^{i(\phi_2-\phi_3+2\phi_4+\phi_5)}.
	\end{eqnarray}
	After multiplying by $e^{i(\phi_1-\phi_4)}$, we obtain that
	\begin{eqnarray}
	\sin 2\theta_2 \cos 2\theta_3 
	+\sin 2\theta_3 (e^{-i\psi} \cos^2 \theta_2 
				 -e^{i\psi} \sin^2 \theta_2) =0.
				 	\end{eqnarray}
	By taking the imaginary parts in this equation, we 
obtain that $\sin 2\theta_3 \sin\psi =0$. 
If $\sin 2\theta_3 =0$ then our second claim implies that
$U\in\cM\SU(3)$. We may assume that $\sin\psi=0$, 
and so $\cos \psi=\pm 1$. By taking the real parts in the above equation, we obtain that
	\begin{eqnarray}
		\sin 2\theta_2 \cos 2\theta_3 + \cos 2\theta_2 \sin 2\theta_3
		\cos\psi=0. 
	\end{eqnarray}
Suppose $\cos\psi=1$, then one of $\sin(\theta_2+\theta_3)$ and $\cos(\theta_2+\theta_3)$ equals to 0. Suppose $\cos\psi=-1$, then one of $\sin(\theta_2-\theta_3)$ and $\cos(\theta_2-\theta_3)$ equals to 0.
In both cases it is easy to deduce from (\ref{u}) that 
 $U\in\cM\SU(3)$. 
 	
	It remains to prove that $V \in \cM \SU(3)$.
	Since $U$ is good and monomial, the three diagonal blocks of 
	$(U,I_3)\cdot H_0$ are just a permutation of the three diagonal blocks of $H_0$. From (\ref{c-P}), we see that each of the three diagonal blocks of $(U,I_3)\cdot H_0$ has 3 distinct eigenvalues. Consequently the three diagonal blocks of $(U,V)\cdot H_0$ are diagonal matrices only if $V\in\cM\SU(3)$. This completes the proof.
\end{proof}

\section{APPENDIX C: The real K-M conjecture and its proof}
\label{prreal}

Here we consider the real analog of the {\em K-M conjecture}. Thus $\cH_A$ and $\cH_B$ will be real Hilbert spaces of dimension 3, and their tensor product 
$\cH=\cH_A\ox\cH_B$ is taken over the reals, $\bbR$. 
In this section, $\cM$ denotes the space of real symmetric matrices of order 9. 

The subspaces $\cM_1$, $\cM_0$ and $\cM_{00}$ of $\cM$ are defined in the same way as in the complex case. 
The local unitary group $\Un(3)\times\Un(3)$ is now replaced by the local orthogonal (LO) group $\Ort(3)\times\Ort(3)$. Its action on $\cM$ is given by the formula 
$(A,B)\cdot M:=(A\ox B)M(A\ox B)^T$. Since the matrices 
$(\pm I_3,\pm I_3)$ act trivially on $\cM$, each 
$\Ort(3)\times\Ort(3)$-orbit in $\cM$ is also an 
$\SO(3)\times\SO(3)$-orbit. We refer to these orbits as the 
{\em LO-orbits}. Two matrices in $\cM$ are {\em LO-equivalent} if they belong to the same LO-orbit. 

Given a bipartite matrix $M$, we denote its partial transpose w.r.t. system $B$ by $M^{\G_B}$, i.e., if 
$M=\sum^{m_1}_{i=1}\sum^{n_1}_{j=1}\ketbra{i}{j}\ox M_{i,j}$
then
$M^{\G_B}=\sum^{m_1}_{i=1}\sum^{n_1}_{j=1}\ketbra{i}{j}\ox M_{i,j}^T$. 
We introduce an additional subspace, $\cM_2$. It is the subspace of $\cM_{00}$ consisting of all matrices $H$ such that 
$H^{\G_B}=H$. Equivalently, $\cM_2$ consists of all matrices 
$H\in\cM_{00}$ having all blocks $H_{ij}$ symmetric. One can easily verify that all four subspaces: 
$\cM_1$, $\cM_0$, $\cM_{00}$ and $\cM_2$ are LO-invariant.

The following conjecture is the real analog of the original 
{\em K-M conjecture}. We shall refer to it as the 
{\em real K-M conjecture}.

\bcj
In the real setting described above, if $M\in\cM_1$ and $M\ge0$, then $M$ is LO-equivalent to a dd-matrix.
\ecj

As in the complex case, there are several equivalent formulations of this conjecture. We state four of them in the next lemma. 

\bl \label{cj:real-00} Each of the following assertions is 
equivalent to the real K-M conjecture:

(i) each matrix in $\cM_0$ is LO-equivalent to a dd-matrix;

(ii) each matrix in $\cM_1$ is LO-equivalent to a dd-matrix;

(iii) each matrix in $\cM_{00}$ is LO-equivalent to a dd-matrix;

(iv) each matrix in $\cM_2$ is LO-equivalent to a dd-matrix.
\el

\bpf
In the cases (i), (ii)  and (iii) the equivalence can be proved in the same way as in the complex case. 
Obviously (iii) implies (iv). To prove the converse, assume that (iv) holds. Let 
$N\in\cM_{00}$ and set $M=N+N^{\G_B}$. Since $M^{\G_B}=M$, (iv) implies that there exists $(A,B)\in\Ort(3)\times\Ort(3)$ such that $(A,B)\cdot M=(A,B)\cdot N+(A,B)\cdot N^{\G_B}$ is a 
dd-matrix. Since 
$(A,B)\cdot N^{\G_B}=\left( (A,B)\cdot N \right)^{\G_B}$, 
the matrices $(A,B)\cdot N$ and $(A,B)\cdot N^{\G_B}$ share the same diagonal blocks. As their sum is a dd-matrix, the same is true for $(A,B)\cdot N$. Thus (iv) implies (iii).
\epf

Denote by $\cD_2$ the subspace of $\cM_2$ consisting of all 
dd-matrices in $\cM_2$.

\begin{theorem}
\label{thm:real}
	The real K-M conjecture is true.
\end{theorem}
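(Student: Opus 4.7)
The plan is to mimic the argument of Appendix A, replacing $\SU(3)$ by $\SO(3)$, $\cD_{00}$ by $\cD_2$, and the integer-valued Brouwer degree by the mod-$2$ degree (needed because the codomain is non-orientable, as explained below). By Lemma \ref{cj:real-00}(iv) it suffices to show that the smooth map
\begin{eqnarray}
g:\SO(3)\times\SO(3)\times\cD_2\to\cM_2,\qquad g(A,B,H):=(A,B)\cdot H,
\end{eqnarray}
is onto. A direct parameter count (block-symmetry forces each $3\times 3$ block of $M\in\cM_2$ to be symmetric, and $M_A=M_B=0$ imposes $11$ further independent linear constraints) gives $\dim\cM_2=25$ and $\dim\cD_2=19$, so both the domain and the codomain of $g$ are $25$-dimensional. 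Since $g$ is $\bbR$-linear in $H$, it descends to a smooth map $\phi:\SO(3)\times\SO(3)\times\cP\cD_2\to\cP\cM_2$ between compact real $24$-manifolds. Let $\G:=\cM\SO(3)\times\cM\SO(3)$, the finite group of pairs of determinant-$1$ signed-permutation matrices (a discrete group of order $576$), acting via $(\g_1,\g_2)\bullet(A,B,[H]):=(A\g_1^T,B\g_2^T,[(\g_1,\g_2)\cdot H])$. This action is well defined because signed-permutation conjugation preserves both the symmetric and the diagonal block structure, and it is free. Hence the orbit space $\cN$ is a smooth compact $24$-manifold and $\phi$ descends further to a smooth map $\phi^\#:\cN\to\cP\cM_2$.

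The remainder of the argument is to exhibit a distinguished point $P_0=(A_0,B_0,H_0)\in\SO(3)\times\SO(3)\times\cD_2$ such that (a) $P_0$ is a regular point of $g$, and (b) $\phi^{-1}(Q)=\G\bullet P_0$, where $Q:=\phi(P_0)$. I would take $H_0\in\cD_2$ to be a block-symmetric real analog of the matrix in \eqref{c-P}, chosen so that the three diagonal blocks of $H_0$ are diagonal matrices with pairwise distinct and non-coinciding spectra, and so that the off-diagonal blocks are symmetric with enough generic content to break the remaining symmetries (for instance, replace the $\pm i$ entries of \eqref{c-P} by generic real values and fill in a suitable symmetric $H_{23}$). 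Take $A_0=B_0$ to be a rotation specified by three generic Euler angles. Assertion (a) is then a $25\times 25$ Jacobian-rank computation carried out in a three-Euler-angle parametrization of each copy of $\SO(3)$ together with a $19$-parameter basis of $\cD_2$, directly analogous to the computation at the end of Appendix A.

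The hard part is (b), the real analog of Lemma \ref{H0m}. Writing $A'=A_0^T A$ and $B'=B_0^T B$, one must show that $A',B'\in\cM\SO(3)$ whenever $(A',B')\cdot H=H_0$ for some $H\in\cD_2$. The argument parallels Appendix B. The three diagonal blocks of $(A',I_3)\cdot H_0$ must commute pairwise so that a single rotation $B'$ simultaneously diagonalizes them, which is the vanishing of the commutator of the first two blocks. Parametrizing $A'$ by three Euler angles and running a case analysis in the spirit of the three-claim argument of Appendix B (``at least one Euler angle is $0$ or $\pi/2$''; ``two force the third''; ``one plus case-by-case''), we would force $A'\in\cM\SO(3)$; because $H_0$ is real-symmetric and has no phase factors, the reduction should be noticeably shorter than in Appendix B, but choosing $H_0$ carefully enough to rule out accidental commutation among the transformed diagonal blocks is the delicate point. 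Once $A'$ is a signed permutation, the three diagonal blocks of $(A',I_3)\cdot H_0$ are a permutation of the (already diagonal) blocks of $H_0$, and because these have pairwise distinct spectra the only real orthogonal matrices $B'$ simultaneously diagonalizing them are signed permutations, giving $B'\in\cM\SO(3)$.

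Finally, since $\cP\cM_2\cong\bbR P^{24}$ is non-orientable (the dimension is even), we invoke the mod-$2$ degree rather than the integer Brouwer degree: assertions (a) and (b) say that $Q$ is a regular value of $\phi^\#$ with exactly one preimage $P_0^\#$, so $\deg_2(\phi^\#)=1$, and therefore $\phi^\#$, $\phi$, and $g$ are all onto. Combined with Lemma \ref{cj:real-00}(iv), this proves the real K-M conjecture.
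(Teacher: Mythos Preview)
Your proposal is correct and follows essentially the same strategy as the paper's Appendix~C proof: reduce via Lemma~\ref{cj:real-00}(iv), projectivize, quotient by the free $\G$-action of order $576$, and exhibit a regular value of $\phi^{\#}$ with a single preimage. The paper fills in what you leave as a sketch by supplying an explicit $Z_0\in\cD_2$ (see \eqref{tacka-P}), taking $X_0=Y_0$ to be a particular signed permutation rather than a generic rotation, and carrying out the Euler-angle analysis of your step~(b) in full via a rank-$1$ condition on a $3\times 2$ coefficient matrix; your observation that the non-orientability of $\cP\cM_2\cong\bbR P^{24}$ forces the mod-$2$ degree is correct and is handled implicitly by the references the paper cites.
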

\begin{proof} We have to show that the map
	\begin{equation} \label{funkcija-f}
		f: \SO(3) \times \SO(3) \times \cD_2 \to \cM_2
	\end{equation}
	defined by $f(X,Y,Z)=(X,Y)\cdot Z$ is onto. 
	Since $\cD_2$ and $\cM_2$ are real vector spaces and 
	the map $f$ is linear in $Z$, $f$ induces a map
	\begin{equation} \label{funkcija-phi}
		\phi: \SO(3) \times \SO(3) \times \cP\cD_2 \to \cP\cM_2,
	\end{equation}
	where $\cP\cD_2$ and $\cP\cM_2$ are the real projective spaces
	associated with $\cD_2$ and $\cM_2$, respectively.
	It suffices to show that the map $\phi$ is onto. Note that 
	$\SO(3) \times \SO(3) \times \cP\cD_2$ and $\cP\cM_2$ are compact smooth manifolds with empty boundaries. Moreover they have the same dimension, namely 24, and the map $\phi$ is smooth. For nonzero $Z \in \cM_2$ we denote by $[Z]$ the 
	1-dimensional subspace of $\cM_2$ viewed as a point of 
	$\cP\cM_2$.
	
	Denote by $S_4$ the subgroup of $\SO(3)$ consisting of all
	monomial matrices in $\SO(3)$. It is isomorphic to the symmetric 
	group of order 24. Let us also introduce the subgroup 
	$\Gamma:=S_4 \times S_4$ of $\SO(3) \times \SO(3)$. One can easily verify that the subspace $\cD_2 \subseteq \cM_2$ is $\Gamma$-invariant. Consequently, we can define an action of  $\Gamma$ on the 
	manifold $\SO(3) \times \SO(3) \times \cP\cD_2$ as follows:
	$(\gamma_1,\gamma_2)\bullet (X,Y,[Z]):=
	(X\gamma_1^T,Y\gamma_2^T,[(\gamma_1,\gamma_2)\cdot Z]).$
	It is easy to verify that this action is free. Hence 
	each orbit of $\Gamma$ has exactly $|\Gamma|=576~ (=24^2)$ points.
	We deduce that the quotient space 
	$\cN:=( \SO(3) \times \SO(3) \times \cP\cD_2 )/\Gamma$
	is also a smooth compact manifold of dimension 24, see e.g.
	\cite[Appendix II, Proposition 2, p.229]{MR206917520040101}.
	Moreover the map $\phi$ induces a smooth map 
	$\phi^{\#}: \cN \to \cP\cM_2$.
	
	Let $P=(X_0,Y_0,[Z_0]) \in 
		\SO(3)\times\SO(3)\times\cP\cD_2$ and let 
		$Q:=f(X_0,Y_0,Z_0)\in\cM_2$
	where
\begin{eqnarray} \label{tacka-P}
		X_0=Y_0=\bma 1&0&0\\0&0&-1\\0&1&0\ema,
		Z_0=
		\bma 0&0&0&0&0&1&0&0&0\\
		0&1&0&0&0&0&0&0&1\\
		0&0&-1&1&0&0&0&1&0\\
		0&0&1&0&0&0&0&1&0\\
		0&0&0&0&1&0&1&0&0\\
		1&0&0&0&0&-1&0&0&0\\
		0&0&0&0&1&0&0&0&0\\
		0&0&1&1&0&0&0&-2&0\\
		0&1&0&0&0&0&0&0&2\ema.
	\end{eqnarray}
	We used six Euler angles (three for each copy of $\SO(3)$) as coordinates and computed the Jacobian determinant of $f$ at the point $(X_0,Y_0,Z_0)$. It is not 0, and so this point is a regular point of the map $f$. Consequently, the point $P$ is a regular point of $\phi$. Further, the image of 
	$P$ in $\cN$, i.e. the $\Gamma$-orbit of $P$ 
	is a regular point of $\phi^{\#}$.
	
	We shall now prove that $\G\bullet P$ is the 
	unique point of $\cN$ which satisfies the equation
	$\phi^{\#}(\G\bullet (X,Y,[Z]))=[Q]$. This will 
	prove that the point $[Q]$ is a regular 
	value of $\phi^{\#}$ and so the map $\phi^{\#}$ must be onto, see  \cite[Corollary 3.9.6, p. 96]{1993Differentiable} or 
	\cite[Chapter 2, Section 3]{MR034878119740101}. 
	Hence $\phi$ will be onto too.

It is immediate from the definition of the $\G$-action that
the set $f^{-1}(Q)$ is $\G$-invariant. In particular we have
$\G\bullet (X_0,Y_0,Z_0)\subseteq f^{-1}(Q)$.

Let $(X_1,Y_1,Z_1)\in f^{-1}(Q)$ be arbitrary. Our first claim is that $X_1\in S_4$. We have 
\begin{equation} \label{jed-f}
(X_1,Y_1)\cdot Z_1 = (X_0,Y_0)\cdot Z_0
\end{equation} 
which we can rewrite as 
\begin{equation}
(I_3,Y_0^T Y_1)\cdot Z_1=(X_1^T X_0,I_3)\cdot Z_0.
\end{equation} 
Let $X_2:=X_1^T X_0$ and $M:=(X_2,I_3)\cdot Z_0$. We partition $M$ into nine $3$ by $3$ blocks $M_{ij}$. 
Since $X_0\in S_4$, it suffices to prove that $X_2\in S_4$.

The equation displayed above implies that $M$ can be transformed into a dd-matrix by the action of $\SO(3)$ on the $B$-system only.
Therefore the three diagonal blocks $M_{ii}$ of $M$ must commute with each other. In particular the commutator 
$S:=M_{11}M_{22}-M_{22}M_{11}=[s_{ij}]$ must vanish. Since the blocks $M_{ii}$ are symmetric matrices, $S$ is skew-symmetric.

Let us write the matrix $X_2\in \SO(3)$ as a 
	function of the three Euler angles 
	\begin{equation} \label{mat-X2}
X_2=	\bma 
 	  \cos\a \cos\g -\sin\a \cos\b \sin\g &
      -\cos\a \sin\g -\sin\a \cos\b \cos\g &
	  \sin\a \sin\b \\
 	  \sin\a \cos\g +\cos\a \cos\b \sin\g &
      -\sin\a \sin\g +\cos\a \cos\b \cos\g &
	 -\cos\a \sin\b \\
 	  \sin\b \sin\g & \sin\b \cos\g & \cos\b \\
	\ema,
	\end{equation}
where $\a,\g\in[0,2\pi]$ and $\beta\in[0,\pi]$. Then we compute the matrix $M$ and the commutator $S$. One can easily verify that the solutions $(\a,\b,\g)$ 
of this system in which $\b\in\{0,\pi/2,\pi\}$ give all 24 
matrices $X_2$ in $S_4$, and nothing else. Thus we may assume that $\sin\b \cos\b \ne 0$. 

The three equations $s_{12}=0$, $s_{13}=0$, $s_{23}=0$ can be written in the following form: 	
\begin{eqnarray}
	\label{prva}
	&&2\cos 2\a \cos\b \cos\g (3-2\cos^2\g)
	+\sin 2\a \sin\g 
	(2\cos^2\b \cos^2\g -5\cos^2\b +2\cos^2\g +1)=0, \\
	\label{druga}
	&&2\cos 2\a \cos 2\b \sin 2\g 
	+\sin 2\a \cos\b 
	(6\cos^2 \b \cos^2 \g -5\cos^2 \b -2\cos^2 \g +3)=0, \\
	\label{treca}
	&&2\cos 2\a \cos\b \sin\g (1+\cos^2\g)
	+\sin 2\a \cos\g
	(\cos^2\b \cos^2\g +3\cos^2\b +\cos^2\g -2)=0.
	\end{eqnarray}
We have omitted the factor $\sin\b$ from \eqref{prva} and 
\eqref{treca}. If $\sin\g=0$ then it is easy to see that our claim follows from the above equations. Hence we assume
from now on that $\sin\g\ne0$.
	
Since each of the three equations above is linear and homogeneous in 
$\sin 2\a$ and $\cos 2\a$, it follows that the matrix

\begin{eqnarray}
\bma
\cos\b \cos\g (3-2\cos^2\g) & 
\sin\g(2\cos^2\b\cos^2\g-5\cos^2\b +2\cos^2\g+1)\\
\cos 2\b \sin 2\g & 
\cos\b(6\cos^2\b\cos^2\g-5\cos^2\b-2\cos^2\g+3) \\
\cos\b\sin\g(1+\cos^2\g) &
\cos\g(\cos^2\b\cos^2\g+3\cos^2\b+\cos^2\g-2) \\
\ema
\end{eqnarray}
must have rank 1. By equating to 0 the minor not containing
the middle row, and recalling that $\cos\b\ne0$, we obtain the 
formula
\begin{eqnarray}
\cos^2\b=\frac{1+2\sin^2 2\g}{4+\sin^2\g+2\sin^2 2\g}.
\end{eqnarray}
Finally, by equating to 0 the minor not containing the first
row (and using this formula) we obtain the equation
\begin{eqnarray}
2+3\cos^2\g+10\cos^4\g-14\cos^6\g=0.
\end{eqnarray}
As the LHS is equal to 
$1+\sin^2\g+\sin^2 2\g+14\sin^2\g\cos^4\g$,
this equation has no solutions and our claim is proved, i.e. 
$X_1\in S_4$.

	The proof that $Y_1\in S_4$ is similar and is omitted. Note that in any solution $(X_1,Y_1,Z_1)$ of 
	\eqref{jed-f} the matrices $X_1$ and $Y_1$ determine 
$Z_1$ uniquely. Consequently $f^{-1}(Q)$ is a single
$\G$-orbit.
	This implies that there is only one point 
	of the manifold $\cN$ which satisfies the equation
	$\phi^{\#}(\Gamma\bullet (X,Y,[Z]))=[(X_0,Y_0)\cdot Z_0]$. Hence
	the point $[(X_0,Y_0)\cdot Z_0]$ is a regular 
	value of $\phi^{\#}$ and thus the map $\phi^{\#}$ must be onto.
	Consequently $\phi$ is also onto.
\end{proof}

\end{document}